\def\smallddots{\mathinner{\raise7pt\hbox{.}\raise4pt\hbox{.}\raise1pt\hbox{.}}}
\def\smallsdots{\mathinner{\raise1pt\hbox{.}\raise4pt\hbox{.}\raise7pt\hbox{.}}}
\DeclareMathOperator{\diag}{diag}
\DeclareMathOperator{\nrank}{nrank}
\numberwithin{equation}{section}
\numberwithin{table}{section}
\newtheorem{theorem}{Theorem}[section]
\newtheorem{corollary}{Corollary}[section]
\newtheorem{algorithm}{Algorithm}[section]
\newtheorem{definition}{Definition}[section]
\newtheorem{remark}{Remark}[section]
\begin{document}

\title{\bf Real Polynomial Root-finding
 by Means of Matrix and Polynomial 
Iterations\footnote{This work appeared in Theoretical Computer Science, 2017,
http://dx.doi.org/10.1016/j.tcs.2017.03.032. It 
has been supported by NSF Grants CCF 1116736 and CCF-1563942
and PSC CUNY Award 67699-00 45. 
Some of its results have been presented at CASC 2014.}}
\author{Victor Y. Pan  $^{[1, 2],[a]}$ and Liang Zhao$^{[2],[b]}$  \\
\and\\
$^{[1]}$ Department of Mathematics and Computer Science \\
Lehman College of the City University of New York \\
Bronx, NY 10468 USA \\
$^{[2]}$ Ph.D. Programs in Mathematics  and Computer Science \\
The Graduate Center of the City University of New York \\
New York, NY 10036 USA \\
$^{[a]}$ victor.pan@lehman.cuny.edu \\
http://comet.lehman.cuny.edu/vpan/  \\
%\and\\
$^{[b]}$ lzhao1@gc.cuny.edu \\
} 
 \date{}

\maketitle

%------------------------------------------------------------------------------

\begin{abstract}
Univariate polynomial root-finding is a classical subject, still 
important for modern computing. Frequently one seeks
just the real roots of a polynomial with real coefficients.
They can be approximated at a low computational cost if the
polynomial has no nonreal roots, but typically nonreal roots are 
much more numerous than the real ones. The subject of 
devising efficient real root-finders has been long and
 intensively studied. Nevertheless, we propose some novel 
ideas and techniques and obtain dramatic acceleration of
the known numerical algorithms. In order to achieve our progress 
we exploit the correlation
between the computations with matrices and polynomials, 
randomized matrix computations, and complex plane geometry,
 extend the techniques of the matrix sign iterations, and
 use the structure of the companion matrix of
the input polynomial.
The results of our extensive numerical tests with benchmark polynomials 
and random matrices are quite encouraging. In particular in these tests 
we have consistently computed accurate approximations of
 the real roots of benchmark polynomials 
of degree up to 1024 by using the IEEE standard double precision.
Moreover the number of iterations 
required for convergence of our algorithms
grew very slowly (if at all) as we increased 
the degree of the univariate input polynomials and the
dimension of the input  matrices from 64 to 1024.
  \end{abstract}

%------------------------------------------------------------------------------

\paragraph{Keywords:}
Polynomials,
Real roots,
Matrices,
Matrix sign iterations,
Companion matrix,
Frobenius algebra,
Square root iterations,
Root squaring

\section{Introduction}
Assume
a univariate polynomial of degree $n$  with
 real coefficients,
\begin{equation}\label{eqpoly}
 p(x)=\sum^{n}_{i=0}p_ix^i=p_n\prod^n_{j=1}(x-x_j),~~~ p_n\ne 0,
\end{equation}
which  has $r$ real roots 
  $x_1,\dots,x_r$ and $s=(n-r)/2$
pairs of nonreal complex conjugate roots.
In some applications, e.g., to algebraic and geometric optimization,
one seeks only the $r$ real roots, which typically
make up just a small fraction of all roots.\footnote{Recall the following excerpt from \cite{DGS14}: ``A celebrated result due to Erd\"{o}s and Tur\'an \cite{ET50} says that, for a univariate polynomial over C whose middle coefficients are not too big with respect to its extremal coefficients, the arguments of its roots are approximately equidistributed. Combined with a recent result of 
Hughes and Nikeghbali \cite{HN08}, this shows that the roots 
of such a polynomial clustered near the unit circle." 
This result does not apply to some classes of polynomials of practical 
importance, but the study in \cite{EGT10}
(also see 
the earlier papers \cite{K43} and \cite{EK95})
shows that
the expected number $r$ of the real roots of random real polynomials 
of various such classes stays in the range between orders of $\log (n)$ and $\sqrt n$; .} 
 The design of efficient real root-finders 
is a well studied subject (see
 \cite[Section 10.3.5]{EPT14}, \cite{PT13}, \cite{SMa},
and the  bibliography therein), but the 
most popular packages of subroutines 
for root-finding such as MPSolve 2000 
\cite{BF00}, Eigensolve 2001 \cite{F02},
and MPSolve 2012 \cite{BR14} 
approximate the $r$ real roots about as fast 
and as slow as all the $n$ complex roots.
It can be surprising, but we present some
novel methods that
 accelerate the known numerical real root-finders by a factor of $n/r$,
which is  dramatic in various important applications.
 
The springboard for our real root-finders is the matrix sign iterations,
which we apply to the companion matrix of 
an input polynomial.
It is a well known technique for matrix computations \cite{H08},
and we make it particularly efficient 
for real  polynomial root-finding,
although it has never been used for this purpose so far.
By combining it with a number of known and novel techniques 
we ensure fast convergence of  the iterations and 
their efficiency in  numerical implementation with the IEEE
 standard double precision. 

Our numerical tests 
confirm the efficiency of this approach. In 
particular, we closely approximate the real roots of 
 various benchmark polynomials 
of degree up to 1024 by using double precision.
Moreover the number of 
iterations required for convergence 
was typically quite small and  grew very slowly (if it grew at all)
as the polynomial degree increased from 64 to 1024.  

Some of our techniques should be of independent interest,
e.g., our numerical  
stabilization in Section \ref{snstbz}, our 
 exploitation of 
matrix functions and
randomized matrix computations in Algorithm \ref{alg2},
 and the combination of our maps of the complex plane with 
rational transformations of the variable
 and 
the roots.
Some of our algorithms (e.g., the ones of Section \ref{smbms2}) 
 combine
 operations with matrices and polynomials,
demonstrating once again the
value of 
  synergistic combinations of this kind, 
which we have been advocating since
\cite{P92} and \cite{BP94}.

Our goal in this paper is to present a novel approach to real root-finding 
for a  polynomial and to demonstrate 
the promise of this approach by performing some preliminary tests. 
We hope that we have advanced toward our goals substantially, 
and  there are promising directions for substantial improvement 
of  the implementation of our algorithms. 
  For example, Stage 3 of our Algorithm 3.1 is reduced to the inversion or orthogonalizaton of Toeplitz-like matrices, and the customary numerical algorithms, currently available for these operations, can be dramatically accelerated 
by means of the techniques of the papers  
 \cite{XXCB} and  \cite{XXG12}.

We organize our paper as follows.
In the next  section we cover some background material.
We present a variety  of our 
real polynomial root-finders in Section \ref{s4}. 
In Section \ref{snmt}
(the contribution of the second author)
 we present the results of our numerical tests. 
In the Appendix we cover some auxiliary results.
%Sections \ref{scmrs} and \ref{sprf} of the Appendix
%present some potentially interesting algorithmic extras.
  
%------------------------------------------------------------------------------

\section{Basic Definitions and Results}\label{s3}

%------------------------------------------------------------------------------

Hereafter ``flop" stands for ``floating point arithmetic operation",
assumed to be performed numerically, with bounded  precision, e.g., 
the standard IEEE double precision.

%$\mathbb R$ and $\mathbb C$ denote the fields of real and complex numbers,
%respectively. 

\subsection{Some Basic Definitions for Matrix Computations}\label{sfnd}

%------------------------------------------------------------------------------

\begin{itemize}  
  
\item%1
$\mathbb C^{m\times n}$ denotes the 
linear space of complex $m\times n$ matrices.
$\mathbb R^{m\times n}$ is its subspace of
 $m\times n$ real matrices. 

%To study the eigen-solving for $C_p$, next we recall some
%fundamentals of matrix computations.
\item%2
$M^T=(m_{ji})_{i,j=1}^{m,n}$ is the transpose of a matrix $M=(m_{ij})_{i,j=1}^{m,n}\in \mathbb C^{m\times n}$.
$M^H$ is its Hermitian  transpose. $M^H=M^T$ for a real matrix $M$.

\item%3
$||M||=||M||_2$ denotes its spectral norm.

\item%4
$I=I_n$
%=({\bf e}_1~|~{\bf e}_2~|\ldots~|~{\bf e}_{n})$ 
is the $n\times n$ identity matrix.
%, whose columns
%are the $n$ coordinate vectors ${\bf e}_1,~{\bf e}_2,\ldots,{\bf e}_{n}$.

%$J=J_n=({\bf e}_n~|~{\bf e}_{n-1}~|\ldots~|~{\bf e}_{1})$ is the $n\times n$ reversion matrix,
% $J(v_i)_{i=1}^n=(v_i)_{i=n}^1$.
 %for any vector ${\bf v}=(v_i)_{i=1}^n$.

\item%5
$\diag(b_j)_{j=1}^s=\diag(b_1,\dots,b_s)$ is the $s\times s$  diagonal matrix
with the diagonal entries $b_1$, $\dots$, $b_s$.

\item%6
$\mathcal R(M)$ is the range of
a matrix $M$,
that is, the linear space
generated by its columns.

\item%7
A matrix of full column rank is a {\em matrix basis} of its range.

\item%8
A matrix $Q$ is 
{\em unitary } if $Q^HQ=I$
 or $QQ^H=I$, and such a matrix is called  {\em orthogonal}
if it is real. 

\item%9
Suppose  
an $m\times n$ matrix $M$ has rank $n$
(and so
%consequently 
$m\ge n$). Write  $(Q,R)=(Q(M),R(M))$ to
denote a unique pair of a unitary $m\times n$ matrix $Q$ and
an upper triangular $n\times n$ matrix $R$ such that $M=QR$
and all diagonal entries of the matrix $R$
are positive  \cite[Theorem 5.2.3]{GL13}.

\item%10
$M^+$ is the unique Moore--Penrose pseudo inverse of $M$ \cite[Section 5.5.2]{GL13},
equal to $M^H$ 
if and only if the matrix $M$ is  unitary.

\item%11
An $m\times n$  matrix $M$ has an $n\times m$ 
{\em left inverse} matrix $X=M^{(I)}$ such that  $XM=I_n$
 if and only  if it has full column rank $n$. In this case 
$M^+$ is a left inverse. 
 The left inverse is
 unique if and only if  $M$ is
 a nonsingular matrix, in which case $m=n$
 and $M^{(I)}=M^{-1}$. 

\item%12
The $\epsilon$-rank of a matrix $M$ is
the minimal rank of the matrices in 
its $\epsilon$-neighborhood. 
{\em Numerical  rank} $\nrank (M)$ is
the $\epsilon$-rank where $\epsilon$
is small in context.
\end{itemize}

\begin{definition}\label{defeig} {\rm Eigenvalues,  eigenvectors and eigenspaces.}
\begin{itemize}  
  
\item%1
A scalar  $x$ is an
{\em eigenvalue}  of a matrix $M$
 associated with an {\em eigenvector}  ${\bf v}$ if
$M{\bf v}=x{\bf v}$.
\item%2
The eigenvectors associated with an
eigenvalue $x$ or, more generally, with any set of the eigenvalues $\mathcal X\in\mathcal X(M)$
 form the {\em eigenspaces}
$\mathcal S(M,x)$ and $\mathcal S(M,\mathcal X)$, respectively,
 associated with 
 the eigenvalue $x$ and the set $\mathcal X$ of eigenvalues, respectively.
A linear subspace $\mathcal S$ of $\mathbb C^{n\times n}$
is an  eigenspace of a matrix $M$ 
if and only if
$M\mathcal S=\{M{\bf v}:{\bf v}\in \mathcal S\}\subseteq \mathcal S$
(see \cite[Definition 4.1.1]{S01}).
\item%3
An eigenvalue $x$ of a matrix $M$ is a root of the characteristic 
polynomial $\det(xI-M)$. The multiplicity of this root
is the {\em algebraic multiplicity} of the eigenvalue $x$,
denoted  $am(x)$. 
The dimension $gm(x)=\dim(\mathcal S(M,x))$
 is the {\em geometric multiplicity} of $x$,
never exceeding $am(x)$. 
An eigenvalue $x$ is {\em simple} if $gm(x)=1$. 
\end{itemize}
\end{definition}

%------------------------------------------------------------------------------

\subsection{The Companion Matrix and the Frobenius Algebra}\label{scmpn}

%------------------------------------------------------------------------------
Let ${\bf e}_n^T=(0,0,\dots,0,1)$ denote the $n$th coordinate vector and write
${\bf p}=(p_i/p_n)_{i=0}^{n-1}$,
 \begin{equation}\label{eqzcrc}
Z=C_0=\begin{pmatrix}
        0   &       &       &   & 0 \\
        1   & \ddots    &       &   & 0 \\
            & \ddots    & \ddots    &   & \vdots    \\
            &       & \ddots    & 0 & 0 \\
            &       &       & 1 & 0 \\
    \end{pmatrix}~{\rm and}~C_p=\begin{pmatrix}
        0   &       &       &   & -p_0/p_n \\
        1   & \ddots    &       &   & -p_1/p_n \\
            & \ddots    & \ddots    &   & \vdots    \\
            &       & \ddots    & 0 & -p_{n-2}/p_n \\
            &       &       & 1 & -p_{n-1}/p_n \\
    \end{pmatrix}=Z-{\bf p}^T{\bf e}_n. 
%~{\rm and}~C_{p_{\rm rev}}=\begin{pmatrix}
%        0   &       &       &   & -p_n/p_0 \\
%        1   & \ddots    &       &   & -p_{n-1}/p_0 \\
%            & \ddots    & \ddots    &   & \vdots    \\
%            &       & \ddots    & 0 & -p_{2}/p_0 \\
%            &       &       & 1 & -p_{1}/p_0 \\
 %   \end{pmatrix} 
\end{equation}

$Z$ is the down-shift matrix.
$C_p$ is the {\em companion matrix} of the polynomial $p(x)$ of
(\ref{eqpoly}), which is the characteristic polynomial of this matrix.
Hence real root-finding for the polynomial $p(x)$ turns into real
eigen-solving for this matrix.
 
 $Z{\bf v}=(v_{i-1})_{i=1}^{n}$,
for a vector ${\bf v}=(v_i)_{i=1}^{n}$ and for $v_0=0$.

%) and $p_{\rm rev}(x)$ of Theorem \ref{thshsc}, respectively,
% where
%$C_p^{-1}=JC_{p_{\rm rev}}J$.
%This is  a special Toeplitz-like matrix, and the following results hold.

%------------------------------------------------------------------------------

\begin{theorem}\label{facfr} ({\em The Cost of Computations in the Frobenius Matrix Algebra}.)
The companion matrix $C_p\in \mathbb R^{n\times n}$ of 
a polynomial $p(x)$ of (\ref{eqpoly})
generates the Frobenius matrix algebra $\mathcal A_p$.
 One needs  $O(n)$ flops for  addition and
$O(n\log (n))$  for multiplication and
inversion  in  this algebra as well as for
multiplication of a matrix in this algebra
  by a vector. These cost bounds hold both for 
exact computation with no errors and  
numerically stable approximate computations.
\end{theorem}

\begin{proof}
The estimates for the exact computation
can be readily deduced from the following expressions from \cite{Pa}
for a matrix $C({\bf p,c})$ in the algebra  $\mathcal A_p$ 
defined by  its first column ${\bf c}=(c_i)_{i=0}^{n-1}$,
\begin{equation}\label{eqfrobm}
C({\bf p,c})= \sum_{i=0}^{n-1}c_iC_p^i=\sum_{i=0}^{n-1}c_iZ^i+{\bf p}{\bf u}^T,
\end{equation}
for ${\bf p}$ of equation (\ref{eqzcrc}),
${\bf u}=(u_i)_{i=0}^{n-1}$, $u_0=0$,
$u_{i}=\sum_{k=n-i}^{n-1}c_k\rho^{k-n+i}$, for $i=1,\dots,n-1$
 and $\rho=-p_{n-1}/p_n$, so that $\rho=0$ if $p_{n-1}=0$. 
(If $c_0=0$, then  invert the matrix $C({\bf q,c_s})=C({\bf p,c})$
 for $q(x)=p(x-s)$ and a random 
real or complex shift $s$.) 
The algorithm of \cite{C96}, using the transition to the so called
 ``Horner's basis", performs numerically stable multiplication in 
the algebra $\mathcal A_p$, 
\cite{XXCB}  performs 
numerically stable inversion by using  $O(n\log^2 (n))$ flops,
which is accelerated by a factor of $\log (n)$ in  \cite[Section 9.8]{P15}.
\end{proof}

%------------------------------------------------------------------------------

\subsection{Decreasing the Size of an Eigenproblem}\label{seig}

%------------------------------------------------------------------------------

An eigenvalue $x$  of a matrix $M$ as well as a set of eigenvalues $\mathcal X$ 
are {\em dominant} 
if they are absolutely larger than all the other eigenvalues. 
An eigenspace is called {\em dominant} 
if it is associated  with a dominant eigenvalue
or a dominant  set of eigenvalues.

The set
$\mathcal X(M)$ of all eigenvalues of a matrix $M$ 
is called its {\em spectrum}.

The Power Method \cite[Section 7.3.1]{GL13}
computes the vector $M^k{\bf v}$, for a random vector ${\bf v}$
and a sufficiently large integer $k$. 
The 1-dimensional vector space $\{tM^k{\bf v}\}$, for $t\in \mathbb C$,
is  expected to
approximate the eigenspace associated with an eigenvalue
$x$ 
if it is  dominant and simple.
This would not work
only if  
the vector ${\bf v}$ has an absolutely small component
along the eigenvector 
associated with this eigenvalue $x$,
but such an event is unlikely, for a
random vector  ${\bf v}$. 
One can choose $k=1$ if the domination of the eigenvalue
$x$ in the spectrum of $M$
is strong. 
Let us extend the Power Method  for $k=1$ 
to the
approximation of a strongly dominant eigenspace of a dimension $r$. 

\begin{algorithm}\label{alg0} {\rm Approximation of the dominant eigenspace.}

    \item{\textsc{Input:}} an $n\times n$ matrix $M$, the dimension $r$ 
of its dominant  eigenspace  $\mathcal U$, $0<r<n$, and two tolerance
bounds: a positive integer $K$ and a positive $\epsilon$.

    \item{\textsc{Output:}} FAILURE (with a low probability) 
    or a unitary matrix $U$ whose range approximates
    the eigenspace $\mathcal U$. 

    \item{\textsc{Computations:}}
         
\begin{enumerate}
  \item%1 
  Apply the randomized algorithm of \cite{HMT11},
which at first generates a  standard Gaussian random $n\times r_+$ matrix $G$
for a proper integer $r_+>r$ and
         then computes the matrix $H=MG$ and 
 the numerical rank $\nrank(H)$. 
 \item%2
Unless  $\nrank(H)=r$, re-apply the algorithm of \cite{HMT11} up to $K$ times
until the equation $\nrank(H)=r$ is  observed.
If it is never observed, output FAILURE (this occurs with a probability near 0).
    \item%3
  If $\nrank(H)=r$, then compute the QR factorization
 $H=Q(H)R(H)$, 
output an $n\times r$ unitary matrix $U$ approximating the first $r$ columns of
the matrix $Q(H)$, and  stop. (The analysis in
 \cite[Section 4]{HMT11}, \cite[Section 7.4]{PQYa}, and \cite[Theorem 4.3]{PZa} shows that,
 with a probability close to 1, the columns of the matrix $U$
closely approximate a unitary basis of the eigenspace 
$\mathcal U$ and that $||M-UU^HM||\le \epsilon ||M||$. The latter bound would 
certify correctness of the output.)
\end{enumerate}
 \end{algorithm}

The arithmetic cost of performing the algorithm is $O(n^2r_+)$,
but decreases to $O(nr_+(r_++\log(n))$, for $M=C_P$,
by virtue of Theorem \ref{facfr}.
It increases by a factor of $\log (r)$ if the
dimension $r$ of the eigenspace $\mathcal U$
is not available, but is computed by using binary search 
that begins with recursive doubling of
the candidate integer values 1, 2, 4, etc. 
The algorithm generates $nr_+$ random parameters,
but its modification using the structured (so called  SRFT) multipliers $G$
involves only $n$ such parameters and only $O(n\log(n))$ flops for
the computation of the product $MG$
(see  \cite[Section 11]{HMT11} and \cite[Section 7.5]{PQYa}).
Alternative application of the orthogonal iterations 
of \cite[Section 7.3.2]{GL13} requires order of $n^2r_+$ flops.

\begin{remark}\label{rerrqr}
Actually the algorithm of \cite{HMT11} works even 
where the input includes
 an 
upper bound $r_+$ on the dimension $r$ 
of the dominant  eigenspace  $\mathcal U$,
rather than the dimension itself,
and then the algorithm can
 compute this dimension $r$  within the above computational cost
as by-product. (The integer $r=\nrank(H)$ can be
obtained, e.g., from
rank revealing QR factorization of the matrix $H$.)
\end{remark}

Now suppose that we have an eigenspace generated by $r$ 
eigenvalues of an $n\times n$ matrix. Then the following simple theorem 
(extending the recipe of the Rayleigh quotients) enables us to
approximate these eigenvalues    
as the eigenvalues of 
an auxiliary $r\times r$ matrix. 

%------------------------------------------------------------------------------

\begin{theorem}\label{thsubs} ({\em Decreasing the Eigenproblem Size to the Dimension of
an Eigenspace}, cf.
 \cite[Section 2.1]{W07}.)

Suppose that $M\in\mathbb C^{n\times n}$,
 $U\in \mathbb C^{n\times r}$, and the matrix $U$
has full column rank $r\le n$ and generates 
 the space $\mathcal U=\mathcal R(U)$.
 Then 

(i) $\mathcal U$ is an eigenspace of $M$
if and only if there exists a matrix $L\in\mathbb C^{k\times k}$
such that $MU=UL$ or equivalently if and only if $L=U^{(I)}MU$,

(ii) $\mathcal X(L)\subseteq \mathcal X(M)$,

(iii) $MU{\bf v}=x U{\bf v}$
if $L{\bf v}=x {\bf v}$,

(iv) the matrix $L$ is unique, that is, its choice is independent
of the choice of a matrix $U$ and its left inverse $U^{(I)}$, and so
$L=U^HMU$, 
for a unitary matrix $U$.
\end{theorem}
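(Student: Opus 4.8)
The plan is to prove the four parts essentially in the order stated, since each builds on the previous one, and the only subtlety lies in making precise the relationship between the left inverse $U^{(I)}$ and the unitary normalization.

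\medskip

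First I would establish part (i). Suppose $\mathcal U = \mathcal R(U)$ is an eigenspace of $M$, i.e.\ $M\mathcal U \subseteq \mathcal U$. Then each column $U{\bf e}_j$ of $U$ lies in $\mathcal U = \mathcal R(U)$, so $MU{\bf e}_j = U{\bf w}_j$ for some vector ${\bf w}_j$; collecting these as the columns of a matrix $L$ gives $MU = UL$. Conversely, if $MU = UL$ for some $L$, then for any ${\bf v}\in\mathcal U$ we have ${\bf v} = U{\bf y}$ for some ${\bf y}$ (as $U$ has full column rank, it is a matrix basis of $\mathcal U$), hence $M{\bf v} = MU{\bf y} = UL{\bf y}\in\mathcal R(U) = \mathcal U$. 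For the "equivalently" clause: if $MU = UL$, left-multiply by $U^{(I)}$ and use $U^{(I)}U = I_r$ to get $U^{(I)}MU = L$; conversely, if $L = U^{(I)}MU$, I must check $MU = UL$, which is where a little care is needed — this is not automatic from an arbitrary left inverse unless we already know $\mathcal U$ is an eigenspace. So the cleanest logical structure is: "$\mathcal U$ is an eigenspace" $\iff$ "$\exists L: MU = UL$" $\Longrightarrow$ "$L = U^{(I)}MU$ for every left inverse", and then note that under the eigenspace hypothesis these are all consistent. I would also remark that $k$ in the statement should read $r$ (the matrix $L$ is $r\times r$), treating this as a typo.

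\medskip

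Part (ii) is then immediate: if $L{\bf v} = x{\bf v}$ with ${\bf v}\neq 0$, then $MU{\bf v} = UL{\bf v} = xU{\bf v}$, and $U{\bf v}\neq 0$ because $U$ has full column rank, so $x\in\mathcal X(M)$ with eigenvector $U{\bf v}$ — which is exactly part (iii), so (ii) and (iii) can be proved in one stroke. For part (iv), uniqueness of $L$: suppose $MU = UL_1 = UL_2$. Then $U(L_1 - L_2) = 0$, and since $U$ has full column rank it has a left inverse, so $L_1 - L_2 = 0$; thus $L$ does not depend on the choice of left inverse used to express it (every left inverse recovers the same $L = U^{(I)}MU$). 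Independence of the choice of basis matrix $U$ is a change-of-basis argument: if $U' = UT$ for an invertible $T\in\mathbb C^{r\times r}$ (any two matrix bases of $\mathcal U$ differ this way), then the associated matrix is $L' = T^{-1}LT$, which is similar to $L$ and so "the same" up to the identification of $\mathcal U$ with $\mathbb C^r$; I would phrase (iv) as: the action of $M$ on $\mathcal U$, represented in coordinates, is well-defined. Finally, for a unitary $U$ we have $U^HU = I_r$, so $U^H$ is a left inverse, and therefore $L = U^HMU$ by the uniqueness just established.

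\medskip

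The main obstacle — really the only one — is the slightly slippery biconditional in (i) involving the left inverse: one must resist the temptation to claim "$L = U^{(I)}MU$ implies $MU = UL$" for an arbitrary left inverse without using that $\mathcal U$ is invariant. I would handle this by proving the chain of implications in the order that avoids the circularity, as sketched above, rather than asserting a naive two-way equivalence. Everything else is routine linear algebra: full-column-rank $\Rightarrow$ injective $\Rightarrow$ existence of left inverses and cancellation on the left, plus the observation that $U{\bf v}\neq 0$ for ${\bf v}\neq 0$.
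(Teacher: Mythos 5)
Your proof is correct, and it is worth noting that the paper itself gives no proof of this theorem at all: it is stated with a citation to Watkins \cite[Section 2.1]{W07} and used as a known fact. So your argument is the self-contained justification the paper omits, and it is the standard one: columns of $MU$ lie in $\mathcal R(U)$ under invariance, giving $MU=UL$; the converse direction and parts (ii)--(iii) follow from $MU{\bf v}=UL{\bf v}$ together with injectivity of $U$ (full column rank, so $U{\bf v}\neq 0$ for ${\bf v}\neq 0$); and uniqueness of $L$ follows by left cancellation, with $L=U^HMU$ in the unitary case since $U^H$ is then a left inverse. Two points where you are actually more careful than the statement itself: you correctly flag that $L\in\mathbb C^{k\times k}$ should read $r\times r$, and you correctly resist reading ``$L=U^{(I)}MU$'' as an unconditional equivalence for an arbitrary left inverse — the honest formulation is exactly the chain you give ($\mathcal U$ invariant $\iff$ $\exists L:\,MU=UL$, and then any left inverse recovers that same $L$). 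Your reading of (iv) as uniqueness for fixed $U$ (independence of the left inverse), with a change of basis $U\mapsto UT$ producing the similar matrix $T^{-1}LT$, is the right interpretation of the paper's loosely worded claim and preserves what is actually used later, namely $\mathcal X(L)\subseteq\mathcal X(M)$ and $L=U^HMU$ for unitary $U$.
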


%------------------------------------------------------------------------------

The algorithm and the theorem enable us to
approximate the $r$ real eigenvalues of a matrix
as the 
$r$ dominant eigenvalues of an auxiliary matrix. 
Theorems  \ref{thsubs} and \ref{thsmf}  (below) together  
suggest a direction to such a reduction, and
we achieve it in Sections \ref{smbms1} and \ref{snstbz}.

%------------------------------------------------------------------------------

\subsection{Matrix Functions and Eigenspaces}\label{seig1}

%------------------------------------------------------------------------------
%------------------------------------------------------------------------------

\begin{theorem}\label{thsmf} ({\em The Eigenproblems for a Matrix
and Its Function}.) 

Suppose that $M$ is a square matrix and that a rational
function $f(x)$ is defined  on its spectrum. 

 (i) Then $f(M){\bf v}=f(x){\bf v}$ if $M{\bf
v}=x{\bf v}$. 

(ii) Let $\mathcal U=\mathcal U_{\mu,f}$ denote 
the eigenspace of the matrix $f(M)$ associated
with its eigenvalue $\mu$. Then this is an eigenspace of the
matrix $M$ associated with all its
eigenvalues $x$ such that $f(x)=\mu$. 

(iii) The space
$\mathcal U$ has dimension 1 and
is associated with a single eigenvalue of $M$ if
$\mu$ is a simple eigenvalue of $f(M)$.
%if the matrix $M$ is diagonalizable and has an
\end{theorem}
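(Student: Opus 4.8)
\noindent The plan is to reduce everything to the eigenvalue identity in part (i) and to the fact that $f(M)$ is, essentially, a polynomial in $M$. First I would write $f=g/h$ with $g,h$ polynomials, $h$ having no zero on the spectrum $\mathcal X(M)$ — this is exactly what ``$f$ is defined on the spectrum'' means. The eigenvalues of $h(M)$ are the values $h(x)$, $x\in\mathcal X(M)$, all nonzero, so $h(M)$ is nonsingular and $f(M):=g(M)h(M)^{-1}$ is well defined; since $h(M)$ commutes with $M$, so does $h(M)^{-1}$, and hence so does $f(M)=g(M)h(M)^{-1}$. For part (i): if $M{\bf v}=x{\bf v}$ then $M^i{\bf v}=x^i{\bf v}$ for all $i\ge 0$, so $g(M){\bf v}=g(x){\bf v}$ and $h(M){\bf v}=h(x){\bf v}$; applying $h(M)^{-1}$ to the latter and using $h(x)\neq 0$ gives $h(M)^{-1}{\bf v}=h(x)^{-1}{\bf v}$, and composing yields $f(M){\bf v}=g(M)h(M)^{-1}{\bf v}=g(x)h(x)^{-1}{\bf v}=f(x){\bf v}$.

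For part (ii), I would first record that $\mathcal U=\mathcal U_{\mu,f}$ is $M$-invariant: for ${\bf w}\in\mathcal U$ we have $f(M)(M{\bf w})=M(f(M){\bf w})=\mu\,(M{\bf w})$, so $M{\bf w}\in\mathcal U$; by Definition \ref{defeig} this already makes $\mathcal U$ an eigenspace of $M$. It then remains to pin down which eigenvalues of $M$ are ``carried by'' $\mathcal U$, i.e. the spectrum of the restriction $M|_{\mathcal U}$. On one hand, if ${\bf v}\in\mathcal U$, ${\bf v}\neq 0$, and $M{\bf v}=x{\bf v}$, then part (i) gives $\mu{\bf v}=f(M){\bf v}=f(x){\bf v}$, so $f(x)=\mu$; hence every eigenvalue of $M|_{\mathcal U}$ lies in $\mathcal X:=\{x\in\mathcal X(M):f(x)=\mu\}$. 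On the other hand, if $M{\bf v}=x{\bf v}$ with ${\bf v}\neq 0$ and $f(x)=\mu$, then part (i) gives $f(M){\bf v}=\mu{\bf v}$, i.e. ${\bf v}\in\mathcal U$; thus $\mathcal U$ contains $\mathcal S(M,\mathcal X)$ and every $x\in\mathcal X$ actually occurs as an eigenvalue of $M|_{\mathcal U}$. Combining, $\mathcal U$ is an eigenspace of $M$ whose associated set of eigenvalues is exactly $\mathcal X$, as claimed.

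For part (iii): if $\mu$ is a simple eigenvalue of $f(M)$ then $\dim\mathcal U=gm(\mu)=1$ by definition; a one-dimensional $M$-invariant subspace $\mathcal U=\mathbb C\,{\bf v}$ satisfies $M{\bf v}=x{\bf v}$ for some scalar $x$, so ${\bf v}$ is an eigenvector of $M$, and by part (ii) — with $f(x)=\mu$ supplied by part (i) — the set $\mathcal X$ collapses to the single eigenvalue $x$. The one place that needs genuine care is part (ii): one must read ``eigenspace'' in the sense of Definition \ref{defeig} (an $M$-invariant subspace), not as a classical eigenspace, and correspondingly one should not claim the sharper equality $\mathcal U=\mathcal S(M,\mathcal X)$ — when $M$ is not diagonalizable, $\mathcal U$ can be strictly larger (take $M$ a nilpotent $2\times 2$ Jordan block and $f(x)=x^2$, so $f(M)=0$ and $\mathcal U=\mathbb C^2$ while $\mathcal S(M,\mathcal X)$ is one-dimensional). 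The correct, provable, and for our purposes sufficient statement is the one above: $\mathcal U$ is $M$-invariant, $\mathcal X(M|_{\mathcal U})=\mathcal X$, and $\mathcal U\supseteq\mathcal S(M,\mathcal X)$. Once the well-definedness and commutation of $f(M)$ are in place, the part-(i) identity run in both directions is essentially the whole argument.
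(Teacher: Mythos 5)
Your proposal is correct and follows essentially the same route as the paper, whose proof consists only of the remark that part (i) is readily verified and implies parts (ii) and (iii); you simply supply the details the paper omits (well-definedness of $f(M)=g(M)h(M)^{-1}$, commutation with $M$, and running the identity of part (i) in both directions). Your caveat that ``eigenspace'' must be read in the invariant-subspace sense of Definition \ref{defeig}, so that $\mathcal U$ may strictly contain $\mathcal S(M,\mathcal X)$ for non-diagonalizable $M$, is a sound and worthwhile clarification consistent with the paper's cited definition.
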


%------------------------------------------------------------------------------

\begin{proof}
We readily verify part (i), which implies parts (ii) and (iii).
\end{proof}

%------------------------------------------------------------------------------

\begin{remark}\label{redim}
 The matrix $Z^k$,
for $1\le k\le n$, has
the single eigenvalue 0  satisfying $am(0)=n$ and $gm(0)=k$, and so
  $\dim(\mathcal U_{0,f})=k$,
for  $M=Z$, $f(x)=x^k$, and $k=1,\dots,n$.
\end{remark}

%------------------------------------------------------------------------------

Suppose that we have computed a matrix basis
 $U\in \mathbb C^{n\times r_+}$,  for
an  eigenspace  $\mathcal U$ of  a matrix function $f(M)$ of
an $n\times n$ matrix $M$. By virtue of Theorem \ref{thsmf},
this is  a matrix basis of an eigenspace of the matrix $M$.
We can first compute a left inverse $U^{(I)}$
or the
orthogonalization $Q=Q(U)$
and then
approximate the eigenvalues of $M$
associated with this eigenspace as the eigenvalues of the
$r_+\times r_+$ matrix $L=U^{(I)}MU=Q^HMQ$
(cf. Theorem \ref{thsubs}). 

If $r=1$, then the matrix $U$ turns into an eigenvector ${\bf u}$,
shared by the matrices $f(M)$ and $M$,
while the matrix $L$ turns into the  
 the Rayleigh Quotient
$\frac{{\bf u}^TM{\bf u}}{{\bf u}^T{\bf u}}$
or the simple quotient $(M{\bf u})_i/u_i$, for any $i$ 
such that $u_i\neq 0$.

%------------------------------------------------------------------------------

%\begin{theorem}\label{thmfb} \cite[Lemma1.7]{BIM11}.
%Suppose $\Omega$ is an open set of $\mathbb C$ 
%and $f(z):~\Omega\rightarrow \mathbb C$ is an analytic complex valued function. 
%For any $n\times n$ matrix $A$ having eigenvalues 
%$\lambda_1,\dots\lambda_n\in \Omega$, the matrix
%$f(A)$ has eigenvalues $f(\lambda_i)$ for $i=1,\dots,n$.
%Moreover if $AV=V\Lambda$ for $n\times m$ matrix 
%$V$ and $m\times m$ matrix $\Lambda$, then $f(A)V=Vf(\Lambda)$.
%\end{theorem}

%------------------------------------------------------------------------------

\subsection{Some Maps in the Frobenius Matrix Algebra}\label{smbms3}

%------------------------------------------------------------------------------

Part (i) of
Theorem \ref{thsmf} implies that, 
for a polynomial $p(x)$ of (\ref{eqpoly}) and a rational function
$f(x)$ defined on the set $\{x_i\}_{i=1}^n$ of its roots, the
rational matrix function $f(C_p)$ has the spectrum
$\mathcal X(f(C_p))=\{f(x_i)\}_{i=1}^n$. In particular, the maps
$$C_p\rightarrow C_p^{-1},~C_p\rightarrow aC_p+bI,~
C_p\rightarrow C_p^2,~C_p\rightarrow \frac{C_p+C_p^{-1}}{2},~{\rm and}~
C_p\rightarrow \frac{C_p-C_p^{-1}}{2}$$
induce the maps of the eigenvalues of the matrix $C_p$, and thus induce
 the maps of the roots of its characteristic polynomial $p(x)$
given by the equations
$$y=1/x,~y=ax+b,~y=x^2,~y=0.5(x+1/x),~{\rm and}~y=0.5(x-1/x),$$
respectively.
The latter two maps can be only applied if the matrix $C_p$ is nonsingular,
so that $x\neq 0$, and similarly for the two dual maps below.

By using the reduction modulo $p(x)$,
define the five dual maps
\begin{eqnarray*}
y=(1/x) \mod p(x),~y=ax+b\mod p(x),~
y=x^2\mod p(x),\\
~y=0.5(x+1/x)\mod p(x),~
{\rm and}~y=0.5(x-1/x)\mod p(x),
\end{eqnarray*}
where $y=y(x)$ denotes polynomials.
%\noindent
Apply the two latter maps recursively, to
define two iterations with polynomials modulo $p(x)$ as follows,
$y_0=x$, $y_{h+1}=0.5(y_h+1/y_h)\mod p(x)$ and
%\begin{equation}\label{eq202}
$y_0=x,~y_{h+1}=0.5(y_h-1/y_h)\mod p(x)$,
% (cf. (\ref{eq20l})),
$h=0,1,\dots$.
%\end{equation}
More generally, define the iteration
%\begin{equation}\label{eq203}
$y_0=x$, $y_{h+1}=ay_h+b/y_h\mod p(x)$, $h=0,1,\dots$,
%\end{equation}
 for any pair of scalars $a$ and $b$,
provided that $y_h=0$, for none $h$. 

%------------------------------------------------------------------------------

\section{Real Root-finders with Modified Matrix Sign Iterations. Variations and Extensions}\label{s4}

%------------------------------------------------------------------------------

In this section we present some efficient numerical real root-finders
based on modification of the matrix sign classical iterations applied to the companion matrix of 
the input polynomial.

\subsection{A Modified Matrix Sign Iterations}\label{smbms1}

%------------------------------------------------------------------------------

Our first algorithm approximates the $r$ real roots of a polynomial $p(x)$
of (\ref{eqpoly}) 
as the real eigenvalues of the companion matrix $C_p$.
It applies the matrix iterations
\begin{equation}\label{eqmsn}
M_0=C_p,~M_{h+1}=0.5(M_h-M_h^{-1}),~{\rm for}~ h=0,1,\dots,
\end{equation}
which modify the {\em matrix sign} iterations
%\begin{equation}\label{eqmsgn}
$\widehat M_{h+1}=0.5(\widehat M_h+\widehat M_h^{-1})$ 
%\end{equation}
(cf.   \cite{H08}). 

For every eigenvalue $x_j$ of the matrix $M_0=C_p$,
 define 
its {\em  trajectory} made up of the eigenvalues of the matrices $M_h$,
being  its images in the maps $M_0\rightarrow M_h$, for $h=1,2,3,\dots$.
More generally 
iterations (\ref{eq20l}) below
 modifying the  M{\"o}bius classical iterations 
$x^{(h+1)}=\frac{1}{2}(x^{(h)}+1/x^{(h)}),~{\rm for}~h=0,1,\dots$,
define a trajectory initiated at any complex point $x^{(0)}$.

Hereafter we write sign$(z)=$sign$(\Re (z))$, for a complex number $z$
(cf.  \cite[page 107]{H08}).

\begin{theorem}\label{thsqnit} ({\em Convergence of the modified M{\"o}bius Iterations}.)
Fix a complex $x=x^{(0)}$ and
   define the modified  M{\"o}bius iterations
%\begin{equation}\label{eq20l0}
%x^{(h+1)}=\frac{1}{2}(x^{(h)}+
%1/x^{(h)})
%~{\rm for}~h=0,1,\dots,
%\end{equation}

\begin{equation}\label{eq20l}
x^{(h+1)}=\frac{1}{2}(x^{(h)}-1/x^{(h)}),~{\rm for}~h=0,1,\dots.
\end{equation}

(i) The values $x^{(h)}$ are real, for all $h$,
if $x^{(0)}$ is real.  

(ii)  $|x^{(h)}-{\rm sign}(x)\sqrt {-1}|\le \frac{2K^{2^{h}}}{1-K^{2^h}}$,
 for
$K=|\frac{x-{\rm sign}(x)}{x+{\rm sign}(x)}|$
and
$h=0,1,\dots$.
\end{theorem}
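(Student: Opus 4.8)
Part~(i) is a one–line induction: if $x^{(h)}$ is a nonzero real number then $x^{(h+1)}=\tfrac12\bigl(x^{(h)}-1/x^{(h)}\bigr)$ is again real, and $x^{(0)}=x$ is real by hypothesis, so every $x^{(h)}$ is real, the recursion being understood to be undefined in the exceptional (measure zero) event that some $x^{(h)}=0$, e.g. $x^{(0)}=1$. As orientation for part~(ii) I would note that writing a real $x^{(0)}$ as $\cot\theta$ turns the recursion into $x^{(h)}=\cot(2^{h}\theta)$, the angle–doubling map; so a real start generically fails to converge, and the real axis is exactly the common boundary of the two basins of attraction that appear below.

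For part~(ii) I would first locate the fixed points of $g(x):=\tfrac12(x-1/x)$: the equation $x=\tfrac12(x-1/x)$ forces $x^{2}=-1$, so they are $\pm\sqrt{-1}$. Then I would linearize $g$ by the Cayley–type M{\"o}bius map $w=\mu(x):=\dfrac{x-\sqrt{-1}}{x+\sqrt{-1}}$, which sends $\sqrt{-1}\mapsto0$, $-\sqrt{-1}\mapsto\infty$, and the open upper half–plane onto the open unit disk (because $|x-\sqrt{-1}|<|x+\sqrt{-1}|\iff\Im x>0$). The crux is the elementary identity
\[
g(x)-\sqrt{-1}=\frac{(x-\sqrt{-1})^{2}}{2x},\qquad g(x)+\sqrt{-1}=\frac{(x+\sqrt{-1})^{2}}{2x},
\]
whose quotient gives $\mu\bigl(g(x)\bigr)=\mu(x)^{2}$. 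Consequently, with $w^{(h)}=\mu(x^{(h)})$ the iteration becomes $w^{(h+1)}=(w^{(h)})^{2}$, so $w^{(h)}=(w^{(0)})^{2^{h}}$ and $|w^{(h)}|=K^{2^{h}}$ with $K=|w^{(0)}|=\bigl|\tfrac{x-\sqrt{-1}}{x+\sqrt{-1}}\bigr|$; here $K<1$ precisely when $\Im x>0$, and in the lower half–plane one uses the mirror substitution with $\pm\sqrt{-1}$ interchanged, so that the attracting fixed point written $\mathrm{sign}(x)\sqrt{-1}$ in the statement is $+\sqrt{-1}$ or $-\sqrt{-1}$ according to the half–plane that contains $x^{(0)}$.

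Finally I would transfer the estimate back through $\mu^{-1}$. From $x^{(h)}=\sqrt{-1}\,\dfrac{1+w^{(h)}}{1-w^{(h)}}$ we get $x^{(h)}-\mathrm{sign}(x)\sqrt{-1}=\dfrac{2\sqrt{-1}\,w^{(h)}}{1-w^{(h)}}$, hence
\[
\bigl|x^{(h)}-\mathrm{sign}(x)\sqrt{-1}\bigr|=\frac{2\,|w^{(h)}|}{|1-w^{(h)}|}\le\frac{2K^{2^{h}}}{1-K^{2^{h}}}
\]
by the reverse triangle inequality $|1-w^{(h)}|\ge 1-|w^{(h)}|=1-K^{2^{h}}>0$; letting $h\to\infty$ also gives $x^{(h)}\to\mathrm{sign}(x)\sqrt{-1}$. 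As a sanity check, the same conclusion follows from the classical convergence analysis of the ordinary sign iteration $\widehat x_{h+1}=\tfrac12(\widehat x_{h}+1/\widehat x_{h})$ after the change of variable $\widehat x=-\sqrt{-1}\,x$, which turns (\ref{eq20l}) into that iteration and turns $\Im x$ into $\Re\widehat x$ (cf. \cite[page~107]{H08}). I do not expect a real obstacle here; the only points needing care are the bookkeeping ones — verifying that the recursion never hits $0$ when $\Im x\neq0$ (equivalently that $|w^{(h)}|\neq1$, which is automatic once $K\neq1$) and keeping straight which half–plane is attracted to which of $\pm\sqrt{-1}$ — while the identity $\mu\circ g=(\cdot)^{2}$ itself is a routine computation.
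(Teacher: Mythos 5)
Your argument is correct and complete, and mathematically it is the same argument the paper relies on: the paper's own proof consists of the remark that part (i) is immediate and that part (ii) ``readily extends'' the estimate on page 500 of \cite{BP96} for the classical iteration $\widehat x_{h+1}=\tfrac12(\widehat x_h+1/\widehat x_h)$ --- i.e.\ precisely the substitution $\widehat x=-\sqrt{-1}\,x$ that you mention as a sanity check at the end. Your Cayley conjugation ($\mu\circ g=\mu^2$, hence $w^{(h)}=(w^{(0)})^{2^h}$ and the transfer back through $\mu^{-1}$) is the standard derivation underlying that cited estimate, so your write-up simply makes self-contained what the paper delegates to a reference, and it adds two worthwhile bookkeeping points the paper omits (that $K<1$ forces $w^{(h)}\neq -1$, hence $x^{(h)}\neq 0$, so the iteration never breaks down off the real axis; and the exact basins of attraction). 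One discrepancy deserves to be stated openly rather than implicitly: you prove the bound with $K=\bigl|\frac{x-\sigma\sqrt{-1}}{x+\sigma\sqrt{-1}}\bigr|$ and limit $\sigma\sqrt{-1}$, where $\sigma$ is the sign of $\Im x$ (the half-plane containing $x^{(0)}$), whereas the theorem as literally worded, with the paper's convention ${\rm sign}(z)={\rm sign}(\Re(z))$, takes $K=\bigl|\frac{x-{\rm sign}(x)}{x+{\rm sign}(x)}\bigr|$ and limit ${\rm sign}(\Re x)\sqrt{-1}$; that literal reading is false (e.g.\ $x=-0.1+10\sqrt{-1}$ has $K<1$ in the literal sense yet converges to $+\sqrt{-1}$, not $-\sqrt{-1}$), while your version is exactly what the substitution into the estimate of \cite{BP96} yields, i.e.\ the intended statement. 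So rather than a gap, your proof supplies a needed correction; it would strengthen the write-up to say so explicitly instead of folding it into the phrase ``according to the half-plane that contains $x^{(0)}$.''
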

\begin{proof} 
Part (i) is immediately verified.
Part (ii) readily extends
the similar estimate on \cite[page 500]{BP96}.
\end{proof}

 Theorem \ref{thsqnit} implies the following result.
 
\begin{corollary}\label{cotj}
As $h\rightarrow \infty$, the trajectories of the
$2s$ nonreal eigenvalues of $M_0=C_p$
 converge to $\pm \sqrt {-1}$ 
with the quadratic rate of convergence right from the start,
whereas the  trajectories of the
$r$ real eigenvalues are real, for all $h$.
\end{corollary}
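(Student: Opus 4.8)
The plan is to deduce Corollary \ref{cotj} directly from Theorem \ref{thsqnit} together with the eigenvalue-tracking observations set up just before the statement. First I would invoke part (i) of Theorem \ref{thsmf}: since the iteration (\ref{eqmsn}) is $M_{h+1}=\frac12(M_h-M_h^{-1})$, which is the rational map $f(x)=\frac12(x-1/x)$ applied to $M_h$, the eigenvalues of $M_h$ are exactly the images of the eigenvalues of $M_0=C_p$ under the $h$-fold composition of $f$. Thus the trajectory of an eigenvalue $x_j$ of $C_p$ in the sense defined above is precisely the scalar sequence $x^{(0)}=x_j$, $x^{(h+1)}=\frac12(x^{(h)}-1/x^{(h)})$ studied in Theorem \ref{thsqnit}. (One should note in passing that the iteration is well defined on each such trajectory as long as no $x^{(h)}$ vanishes; for the nonreal eigenvalues this is guaranteed by the convergence estimate, since the limit $\pm\sqrt{-1}$ is bounded away from $0$, and for the real eigenvalues $x^{(h)}=0$ would only be a transient exceptional event that can be excluded or handled by a perturbation remark.)

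Next, for the $r$ real eigenvalues $x_1,\dots,x_r$ of $C_p$, part (i) of Theorem \ref{thsqnit} applies verbatim: if $x^{(0)}$ is real then $x^{(h)}$ is real for every $h$, so these trajectories stay on the real line for all $h$, which is the last assertion of the corollary. For the $2s$ nonreal eigenvalues, I would apply part (ii) of Theorem \ref{thsqnit}. For any nonreal $x$ we have $\Re(x)\neq 0$, hence ${\rm sign}(x)=\pm 1$ and, crucially, $K=\bigl|\frac{x-{\rm sign}(x)}{x+{\rm sign}(x)}\bigr|<1$: geometrically $K<1$ says exactly that $x$ is strictly closer to ${\rm sign}(x)$ than to $-{\rm sign}(x)$, i.e. $x$ lies in the open half-plane $\{\Re(z){\rm sign}(x)>0\}$, which holds since ${\rm sign}(x)$ was chosen as the sign of $\Re(x)$ and $\Re(x)\neq 0$. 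With $K<1$ the bound $|x^{(h)}-{\rm sign}(x)\sqrt{-1}|\le \frac{2K^{2^h}}{1-K^{2^h}}\to 0$ gives convergence of the trajectory to ${\rm sign}(x)\sqrt{-1}\in\{\pm\sqrt{-1}\}$, and the appearance of the exponent $2^h$ is exactly quadratic convergence, valid for every $h\ge 0$ (``right from the start''), since $K<1$ strictly already at $h=0$.

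Finally I would assemble these two cases into the statement: the $2s$ nonreal eigenvalues split into those with $\Re(x_j)>0$, whose trajectories converge quadratically to $+\sqrt{-1}$, and those with $\Re(x_j)<0$, converging quadratically to $-\sqrt{-1}$ (complex conjugate roots pair up, one to each limit), while the $r$ real-eigenvalue trajectories remain real for all $h$; letting $h\to\infty$ yields the corollary. The only genuine subtlety — and the step I would flag as the main obstacle — is the well-definedness issue: one must be sure the iteration never divides by zero along the relevant trajectories. For the nonreal eigenvalues the convergence estimate resolves it (the iterates stay in a small neighborhood of $\pm\sqrt{-1}$, hence never hit $0$, once one checks the first few steps do not either, which again follows from $K<1$), so I would include a short remark to that effect; everything else is a direct citation of the already-proved Theorems \ref{thsmf} and \ref{thsqnit}.
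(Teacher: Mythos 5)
Your overall route coincides with the paper's: the paper proves the corollary simply by citing Theorem \ref{thsqnit} together with the eigenvalue--trajectory correspondence coming from Theorem \ref{thsmf}(i), exactly as you do, so the structure is right. However, one step of your nonreal-case argument is false as stated: ``for any nonreal $x$ we have $\Re(x)\neq 0$'' is not true --- a real polynomial may have purely imaginary roots (e.g.\ a factor $x^2+c$ with $c>0$), and these are nonreal eigenvalues of $C_p$ with $\Re(x)=0$. For such points ${\rm sign}(\Re(x))$ is not $\pm 1$ and the quantity $K=\bigl|\frac{x-{\rm sign}(x)}{x+{\rm sign}(x)}\bigr|$ you rely on equals $1$, so the bound of Theorem \ref{thsqnit}(ii) gives nothing, and your case analysis does not cover all $2s$ nonreal eigenvalues. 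The gap is fixable: the substitution $x=\sqrt{-1}\,y$ conjugates the map $x\mapsto\frac12(x-1/x)$ into the classical iteration $y\mapsto\frac12(y+1/y)$, with the imaginary axis corresponding to the real line; the resulting quadratic estimate is governed by $K'=\bigl|\frac{x-{\rm sign}(\Im x)\sqrt{-1}}{x+{\rm sign}(\Im x)\sqrt{-1}}\bigr|$, which is $<1$ precisely when $\Im x\neq 0$, i.e.\ for every nonreal $x$.

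The same substitution shows that your assignment of limits is off and, as written, internally inconsistent: you claim that eigenvalues with $\Re(x_j)>0$ go to $+\sqrt{-1}$ and those with $\Re(x_j)<0$ go to $-\sqrt{-1}$, and in the same sentence that conjugate pairs split with one member going to each limit --- but conjugate roots share the sign of the real part, so both statements cannot hold. Since the map has real coefficients, $f(\bar x)=\overline{f(x)}$, so conjugate trajectories converge to conjugate limits; the limit is ${\rm sign}(\Im x)\sqrt{-1}$, determined by the imaginary part, and conjugates do then split as you say. This slip does not affect the corollary itself, which only asserts convergence to $\pm\sqrt{-1}$, and the literal wording of Theorem \ref{thsqnit}(ii) invites your reading; but a correct derivation must use the imaginary-part criterion. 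The real-eigenvalue half of your argument (via Theorem \ref{thsqnit}(i)) and your well-definedness remark (nonreal iterates stay nonreal, since $\Im f(x)=\tfrac12\Im(x)(1+1/|x|^2)$, hence never vanish) are fine.
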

%We use iteration (\ref{eq20l}) to keep the
%computations in the field of  real numbers,
%but sacrifice this minor benefit  in Remark \ref{renmr}
%for the sake of numerical stabilization.

\begin{algorithm}\label{alg2} {\rm Matrix sign iterations modified for real eigen-solving.}

 \item{\textsc{Input:}} two integers $n$ and $r$, $0<r<n$, and
    the coefficients of a polynomial $p(x)$ of equation (\ref{eqpoly}).

    \item{\textsc{Output:}}
    approximations to the real roots $x_1,\dots,x_r$ of the polynomial $p(x)$
or FAILURE with a probability close to 0.

    \item{\textsc{Computations:}}
         \begin{enumerate}
         \item%1
         Write $M_0=C_p$ and recursively compute the matrices $M_{h+1}$ of
(\ref{eqmsn}),
 for $h=0,1,\dots$ (cf.  Corollary \ref{cotj}).

         \item%2
         Fix a 
	 sufficiently 
	 large integer $k$ and compute the matrix $M=M_k^2+I_n$.

         (By extending Corollary \ref{cotj} observe that the map $M_0=C_p\rightarrow M$ sends all nonreal eigenvalues of $C_p$
         into a small neighborhood of the origin 0 and sends all real
         eigenvalues of $C_p$ into the ray $\{x:~x\ge 1\}$.)

         \item%3
          Apply our randomized Algorithm \ref{alg0} in order
         to approximate a unitary matrix $U$ whose columns
form a
basis for the eigenspace
associated with the $r$ dominant eigenvalues of the matrix $M$.

(By virtue of Theorem \ref{thsmf}, this is expected to be the eigenspace associated with
the 
real eigenvalues of the
matrix $C_p$, although
 with a probability close to 0
the algorithm can
 output FAILURE, in which case we stop the computations.)

     \item%4
Compute and output
approximations to the $r$ eigenvalues of the $r\times r$ matrix  $L=U^HC_pU$.
(They approximate the $r$ real eigenvalues of the matrix $C_p$ by virtue of
 Theorem \ref{thsubs} and consequently  approximate the $r$ real roots
of the polynomial $p(x)$.)
    \end{enumerate}
    \end{algorithm}

%------------------------------------------------------------------------------

Stages 1 and 2 involve
$O(kn\log (n))$ flops by virtue of Theorem \ref{facfr}.
% This exceeds the estimate for
% Algorithm \ref{alg1} by a factor of $\log(n)$.
Stage 3  adds $O(n^2r)$ flops and  the cost $a_{rn}$ of
 generating $n\times r$ standard Gaussian random matrix.
Add  $O(r^3)$ flops performed at Stage 4 and 
arrive at the overall arithmetic cost bound
$O((kn\log (n)+nr^2)+a_{rn}$.

%------------------------------------------------------------------------------

\begin{remark}\label{renrei} ({\rm Counting Real Eigenvalues.})
The binary search can produce  the number of real eigenvalues
as the numerical rank of the matrices $M_k^2+I$
when this rank stabilizes as $k$ increases. 
As the number of real roots increases, 
so does the size of the matrix $L$. This has consistently implied 
the decrease of the accuracy of the output approximations in our tests
(see the test results in Section \ref{snmt}).  
One can refine these approximations by applying the inverse 
Power Method or Newton's iterations, but 
if the accuracy becomes too low, one must extend the precision of computing. 
\end{remark}

%------------------------------------------------------------------------------

\begin{remark}\label{refracclrb} ({\rm Acceleration by Means of Scaling.})
One can dramatically accelerate the initial convergence of
Algorithm \ref{alg2} by applying {\em determinantal scaling}  (cf.
\cite{H08}), that is, by replacing the matrix $M_0=C_{p}$ by the matrix
$M_{0}=0.5(\nu C_p-(\nu C_p)^{-1})$, for
$\nu=1/|\det(C_p)|^{1/n}=|p_n/p_0|$.
\end{remark}

%------------------------------------------------------------------------------

\begin{remark}\label{renrei0} {\rm Real and Nearly Real Roots.}

%------------------------------------------------------------------------------

In the presence of rounding errors Algorithm \ref{alg2} and all
our other algorithms approximate   both 
$r$ real and $r_+-r$ nearly real eigenvalues of the matrix $M$,
for some $r_+\ge r$.
The $r$ real eigenvalues, however, are the roots of $p(x)$
and
we can refine their approximations very fast
(cf. Theorems \ref{thren} and \cite{PT14}),
under some mild assumptions about the 
isolation of every such a root from the $n-1$ other roots.
(One can partly relax these assumptions by extending the techniques of \cite{PT14}.)
Then we can readily
select the $r$ real eigenvalues among the $r_+$ real and nearly real ones.

Generally, however, the distinction between 
real and nearly real roots is very slim in our numerical algorithms.
As this was pointed out in \cite{Ba}, 
in the course of performing the iterations, the real eigenvalues can
become nonreal, due to rounding errors, and then would
converge to $\pm \sqrt {-1}$. 
In our extensive tests we 
have never observed such a phenomenon, 
apparently because in these tests the  
convergence  to $\pm \sqrt {-1}$
was much slower 
for the nearly real eigenvalues
than  for the  eigenvalues with 
reasonably large imaginary parts. 
\end{remark}

%------------------------------------------------------------------------------

\subsection{Inversion-free Variations of 
 the Modified Matrix Sign Iterations and Hybrid Algorithms}\label{sinvfr}

%------------------------------------------------------------------------------

%Hereafter $D(X, r)=\{x: |x-X|\le r\}$ and $C(X, r)=\{x: |x-X|=r\}$ denote
% a disc and a circle  on the complex plane, respectively.

The overall arithmetic cost of the Modified Matrix Sign Iterations
is dominated by the cost of $k$ matrix inversions, that is, $O(kn\log^2(n))$
flops (cf. Theorem \ref{facfr}). 
%Moreover, these operations are most prone 
%to numerical stability problems in the algorithm.
 If all nonreal eigenvalues of the matrix $M_0$
lie in the two discs $D(\pm \sqrt{-1},1/2)=\{x:~|x\pm \sqrt{-1}|\le 1/2\}$, then 
we can avoid matrix inversions in 
the Modified Matrix Sign Iterations
by   
replacing iterations (\ref{eqmsn})
%$[0/2]$ Pad$\acute{e}$
with any of the two iteration processes
\begin{equation}\label{eqpd3}
M_{h+1}=0.5(M_h^{3}+3 M_h)
\end{equation}
and
\begin{equation}\label{eqpd5}
M_{h+1}=-0.125(3 M_h^{5}+10 M_h^{3}+15 M_h),
\end{equation}
 for $h=0,1,\dots$.
Right from the start both
iterations
send the nonreal roots 
 toward the two points $\pm \sqrt{-1}$
 with quadratic and cubic convergence rates,
respectively. (In order to prove this, extend the proof of \cite[Proposition
4.1]{BP96}.) Both iteration processes keep the real roots real
 and use $O(n\log (n))$ flops per iteration.

What if the nonreal roots do not lie in these discs?
We can apply the following combination of iterations 
(\ref{eqmsn})--(\ref{eqpd5})
and Corollary \ref{cownmb} of Section \ref{srrd}. 

\begin{algorithm}\label{alghbr} {\rm A Hybrid Algorithm.}

    \item{\textsc{Input, Output}  as in Algorithm \ref{alg2}.}
\item{\textsc{Computations:}}
          Perform
 the iterations of
Algorithm \ref{alg2} until  
 a test shows that the $2s$ nonreal eigenvalues of the
input companion matrix are mapped into the discs $D(\pm \sqrt{-1},1/2)$. 
(For testing this condition, apply the algorithm that supports  
Corollary \ref{cownmb}. To keep the computational cost down,
apply this test periodically, according to a fixed policy,
based on heuristic rules or the statistics
of the previous tests.)
Then
 shift the computations
 to the inversion-free iterations 
(\ref{eqpd3}) or (\ref{eqpd5}) converging faster and using
 $O(n\log (n))$ flops per iteration. 
    \end{algorithm}

Let us recall some alternative matrix iterations for real root-finding without inversions.
Recall that sign$(M)=M(M^2)^{-0.5}$
and apply
 the Newton--Schultz iterations for 
the approximation of the matrix square root  \cite[equation (6.3)]{H08},

$$Y_{k+1}=0.5~Y_k(3I-Z_kY_k),~~Y_0=M^{-2},$$
and
$$Z_{k+1}=0.5~(3I-Z_kY_k)Z_k,~~Z_0=I,$$
for $k=0,1,\dots$.
The iterations keep real eigenvalues real and converge if $||I-M^{-2}||_p<1$,
for $p=1,2,$ or $\infty$. This assumption is 
easy to satisfy by means of scaling $M\rightarrow aM$, which keeps real eigenvalues real, 
for real $a$.

The similar coupling technique of \cite{Ba} is even simpler, because it is
 applied directly to the modified matrix sign iterations (\ref{eqmsn}),
preserving its quadratic convergence to  $\pm \sqrt{-1}$ right from the start.

In our tests
for numerical
real root-finding, however, we could perform safely only
a small number of
these inversion-free iterations at the initial stage,
 and then the images of the real eigenvalues of the matrix $C_p$ grew very large and
the condition numbers of the computed matrices blew up.

%------------------------------------------------------------------------------

\subsection{Numerical Stabilization of 
 the Modified Matrix Sign Iterations}\label{snstbz}

%------------------------------------------------------------------------------

%\begin{remark}\label{renmr} ({\rm Avoiding Numerical Problems.})
The images of nonreal eigenvalues of the matrix $C_p$ converge to
$\pm \sqrt {-1}$ in the iterations of Stage 1
of Algorithm
  \ref{alg2}, but if the
images of some real eigenvalues of $C_p$ come close to 0,
then at the next step we would have to invert an ill-conditioned matrix $M_h$
unless we are applying an inversion-free variant of
the iterations of the previous subsection.

We can try to avoid this problem by
  shifting the matrix  (and its eigenvalues), that is,
by adding to the current matrix $M_h$
the matrix $s I$, for a reasonably 
small positive scalar $s$ or $-s$. We can select this scalar
 by applying heuristic methods or randomization.
In our tests this policy has 
preserved convergence quite well, but
we have no formal support for this observation.
The following stabilization of Algorithm   \ref{alg2}
 involves nonreal values 
even when the matrix $C_p$ was real,
but has both formal and empirical support.

\begin{algorithm}\label{alg2a} {\rm Numerical stabilization of the modified matrix sign iterations.}

    \item{\textsc{Input, Output} and Stages 3 and 4 of \textsc{Computations} are as in Algorithm \ref{alg2}, except that the input includes a small positive  scalar $\alpha$ 
such that no eigenvalues of the matrix $C_p$ have 
imaginary parts close to $\pm \alpha \sqrt {-1}$
(see Remark \ref{realph} below),
    the set of $r$ real roots $x_1,\dots,x_r$ of the polynomial $p(x)$
is replaced by the set of its $r_+$ roots having the imaginary parts in the
range
$[-\alpha,\alpha]$, and the integer $r$ is replaced by the integer $r_+$ throughout.}
  	
    \item{\textsc{Computations:}}
         \begin{enumerate}
         \item%1
         Apply Stage 1 of Algorithm \ref{alg2} to the two matrices 
         $M_{0,\pm}=\alpha\sqrt {-1}~I\pm C_p$, thus producing  two sequences 
	of the matrices $M_{h,+}$ and  $M_{h,-}$, for $h=0,1,\dots$. 

         \item%2
         Fix a 
	 sufficiently 
	 large integer $k$ and compute the matrix $M=M_{k,+}+M_{k,-}$.
 
    \end{enumerate}
    \end{algorithm}

Because of the assumed choice of $\alpha$, the matrices  $\alpha\sqrt {-1}~I\pm C_p$ have
no real eigenvalues, and so the images of all their eigenvalues, that is, the eigenvalues of the matrices
$M_{k,+}$ and $M_{k,-}$, converge to 
$\pm \sqrt {-1}$ as $k \rightarrow \infty$. Moreover, one can verify that the eigenvalues
of the matrix $M_{k,+}+M_{k,-}$ converge to 0 unless they are the images of the $r_+$ eigenvalues of the
matrix $C_p$ having the imaginary parts in the range $[-\alpha, \alpha]$. The latter   eigenvalues of the matrix  $M_{k,+}+M_{k,-}$
converge to $2\sqrt {-1}$. This shows correctness and numerical stability of 
Algorithm \ref{alg2a}.

The algorithm approximates the $r_+$ roots of $p(x)$ 
by using $O(kn\log(n)+nr_+^2)+a_{r_+n}$ flops,
versus $O(kn\log(n)+nr^2)+a_{rn}$ involved in
Algorithm \ref{alg2}.

\begin{remark}\label{realph}
We can test the proximity of the roots to a line in two stages:
by at first moving the line into the unit circle 
$\{x:~|x|=1\}$ 
(cf. Theorem \ref{thcl})
and then applying algorithms that supports 
Theorem  \ref{thrrd} or
Corollary \ref{cownmb}.
\end{remark}

%------------------------------------------------------------------------------

\subsection{Square Root Iterations (a Modified Modular Version)}\label{smbms2}

%------------------------------------------------------------------------------

Next we describe a dual polynomial version of Algorithm \ref{alg2}.
It extends the square root iterations 
$y_{h+1}=\frac{1}{2}(y_h+1/y_h)$, $h=0,1,\dots$,
and at Stage 2 involves the computation 
of the polynomial ${\rm agcd}(p,t_k)$, which denotes 
an {\em approximate
greatest common divisor} of the input polynomial
$p=p(x)$ and an auxiliary polynomial $t_k=t_k(x)$.
We refer the reader to
 \cite{P01b},
 \cite{KYZ05}, \cite{BB10},  \cite{WH10}, \cite{BR11},
and \cite{SSa}
for the definitions of this concept and the algorithms
for its computation.

Compared to Algorithm \ref{alg2}, we replace
all rational functions in the matrix $C_p$ by the same rational functions
in the variable $x$ and reduce them  modulo
the input polynomial $p(x)$.
The reduction does not affect the values of the functions at the roots
of $p(x)$, and it follows that these values are precisely the eigenvalues of
the rational matrix functions computed in Algorithm \ref{alg2}.

\begin{algorithm}\label{alg3} {\rm Square root modular iterations modified for real root-finding.}

    \item{\textsc{Input:}} two integers $n$ and $r$, $0<r<n$, and
    the coefficients of a polynomial $p(x)$ of equation (\ref{eqpoly}).

    \item{\textsc{Output:}}
    approximations to the real roots $x_1,\dots,x_r$ of the polynomial $p(x)$.

    %\item{\textsc{Initialization:}}
    %fix a polynomial $r_0(x)=f(x), 0 < \deg f(x) < n$, a large
    % integer $k$, and a small positive $\varepsilon$.

    \item{\textsc{Computations:}}
         \begin{enumerate}
         \item%1
(Cf. (\ref{eqmsn}).)         Write $y_0=x$ and  compute the polynomials
\begin{equation}\label{eqsqrt}
y_{h+1}=\frac{1}{2}(y_h-1/y_h)\mod p(x),~h=0,1,\dots.
\end{equation}

         \item%2
         Periodically, for some selected integers $k$,
compute the polynomials  $t_k=y_k^2+1\mod p(x)$.

     \item%3
    Write 
 $g_k(x)={\rm agcd}(p,t_k)$
and compute $d_k=\deg(g_k(x))$.
If $d_k=n-r=2s$,
compute the polynomial $v_k\approx p(x)/g_k(x)$
of degree $r$. Otherwise  continue the iterations of Stage 1.

     \item%4
     Apply one of the algorithms of  \cite{BT90}, \cite{BP98},
% \cite{DJLZ96},
 and \cite{DJLZ97}
         (cf. Theorem \ref{thmlg}) to
     approximate the $r$ roots $y_1,\dots,y_r$ of the polynomial $v_k$.
 Output these approximations.
        \end{enumerate}
    \end{algorithm}

Our comments preceding this algorithm show that the values of the polynomials
$t_k(x)$ at the roots of $p(x)$ are equal to the images
of the eigenvalues of the matrix $C_p$ in Algorithm \ref{alg2}.
Hence the values of the polynomials $t_k(x)$ at the nonreal roots 
of $p(x)$ 
converge to 0
as $k\rightarrow \infty$, whereas their values at
 the real roots of $p(x)$ stay far from 0. Therefore, for sufficiently large integers $k$,
 ${\rm agcd}(p,t_k)$ turns into the polynomial $\prod_{j=r+1}^n(x-x_j)$.
This implies correctness of the algorithm.

Its asymptotic computational cost is $O(kn\log^2(n))$ plus the cost of computing
${\rm agcd}(p,t_k)$ and
choosing the integer $k$ (see  our next remark).

\begin{remark}\label{resmlr}
The latter algorithm
reduces real root-finding essentially
to the computation of agcd$(p,t_k)$.
One can apply quite efficient heuristic algorithms for this
computation (cf. \cite{P01b},  \cite{KYZ05}, \cite{BB10}, 
 \cite{WH10}, \cite{BR11}, and \cite{SSa}), but no good 
formal estimates  are available for
their complexity. One can, however,
note that $p(x)u_k(x)\approx t_k(x)v_k(x)$, and so,
assuming that $v_k(x)$ is a monic polynomial
(otherwise we can scale it), can obtain its other coefficients
(as well as the coefficients of the polynomial $u_k(x)$)
from the least-squares solution to the associated Sylvester linear system
of equations. Its well known superfast divide and conquer solution involves
order of $n\log^2(n)$  arithmetic operations
(cf. \cite[Chapter 5]{P01}),
but the recent numerically stable algorithm
of \cite{XXCB} accelerated by a factor of $\log (n)$
in \cite[Section 9.8]{P15} involves only $O(n\log (n))$ flops.
%Moreover,  let us reveal serious problems of numerical
%stability for this algorithm and for
%the similar algorithms of \cite{C96} and \cite{BP96}.
%Consider the case where $r=0$.
%Then the polynomial $t(x)$ has degree at most $n-1$,
%and its values at the $n$ nonreal roots of the polynomial
%$p(x)$ are close to 0. This can only occur if $||t_k(x)||\approx 0$.
%If $r=1$, however,
%$g_k(x)\approx ct_k(x)$
%for a scalar $c$, and so
%we do not need to compute the agcd and can
%remove the cost of this stage from the
%overall cost bound. If $r=2$, we can similarly approximate
%$g_k(x)$ by applying Algorithm \ref{alg3} to the
%polynomial $t_k(x)$ replacing $q(x)$.
%{\em Recursively we extend Algorithm \ref{alg3}
%to approximating the $r$ real roots at the overall arithmetic cost
%$O(knr\log(n))$ flops,} which is attractive in the case of small integers $r$.
\end{remark}

%------------------------------------------------------------------------------

\section{Numerical Tests}\label{snmt}

%------------------------------------------------------------------------------

Extensive numerical tests of the algorithms of this paper, performed in the Graduate
Center of the City University of New York. 
They
are the contribution of the second author 
(at some points he was assisted by Ivan Retamoso).
The tests recorded  the number of iterations and the error  
of the approximation of the real roots of 
benchmark polynomials to which we applied these algorithms.
We have recorded similar data also for the  
approximation of real eigenvalues of some random  matrices
 $M$ by means of applying Algorithms \ref{alg2} and \ref{alg2a}.  
In the latter case the convergence 
of these algorithms and the number of their iterations
depended 
%essentially 
mostly on the characteristic polynomials of $M$,
even though the estimates for the arithmetic cost of performing each 
iteration generally grew compared to the special case where $M=C_p$.

In some cases we stopped the iterations 
already when they produced crude 
approximation to the roots. This is
because, instead of continuing the iterations, we can apply
the algorithms of \cite{PT14}
followed by 
Newton's or Ehrlich--Aberth's iterations
(cf. Section \ref{sfitr}), which
refine very fast 
these crude approximations.

Finally we note that the test
results  in the present section are quite encouraging
(in spite of our caveat in Remark \ref{renrei}),
e.g., the numbers of iterations 
required for convergence of our algorithms
have grown very slowly (if at all) when we increased 
the degree of the input polynomials and 
dimension of the input  matrices from 64 to 1024.
 We performed all tests with the IEEE standard double precision.

The implementation is available upon request.

%------------------------------------------------------------------------------

\subsection{Tests for the Modified Matrix Sign Iterations (Algorithm  \ref{alg2})}\label{snmtms}

%------------------------------------------------------------------------------

In the first series of the tests, Algorithm \ref{alg2} has been applied to one of the Mignotte
benchmark polynomials, namely $p(x)=x^n + (100x-1)^3$. It is known that this
polynomial has three ill-conditioned roots clustered about $0.01$
and has $n-3$ well-conditioned  roots. In the tests, 
Algorithm \ref{alg2} has output
the roots within the error less than $10^{-6}$ 
by using 9 iterations, for $n=32$ and  $n=64$
and by using 11 iterations, for $n=128$ and $n=256$.

In the second series of the tests, polynomials $p(x)$ of degree
$n=50,100,150,200$, and $250$ have been generated as the products
$p(x)=p_{1}(x)p_{2}(x)$, for the $r$th degree  Chebyshev polynomial  $p_{1}(x)$
(having $r$ real roots),  $r=8,12,16$, 
and $p_{2}(x)=\sum^{n-r}_{i=0}a_{i}x^{i}$, $a_{j}$ being i.i.d. standard Gaussian random variables,
for  $j=0,\dots,n-r$.
Algorithm  \ref{alg2} (performed with  double precision) was
applied to 100 such polynomials $p(x)$, for each pair
of $n$ and $r$.
 Table \ref{tabhank} displays the output data,
 namely, the average values and the standard deviation of 
the numbers of iterations and of the maximum difference between the output values
of the roots and their values produced by MATLAB root-finding
function "roots()".

In the third series of the tests,
Algorithm \ref{alg2} 
 approximated the real eigenvalues 
$x_{1},\dots,x_{r}$
of a  random complex symmetric matrix
 $A=U^{T}\Sigma U$, for 
$\Sigma=\diag(x_{1},\dots,x_{r},y_{1},\dots, y_{n-r})$,  
 $r$ i.i.d.  real standard Gaussian random variables
 $x_{1},\dots,x_{r}$, 
 $n-r$ i.i.d. complex (non-real) 
standard  Gaussian random variables
 $y_{1},\dots, y_{n-r}$, and 
 a $n\times n$ standard 
Gaussian random orthogonal matrix $U$.
 Table \ref{tabhank2} displays the mean and standard deviation of
the number of iterations and the error bounds in these tests,
for $n=50,100,150,200,250$ and $r=8,12,16$.
  
In order to estimate the number of iterations required in our 
algorithms, we periodically estimated the numerical
rank of the associated matrix  in every $k$ successive iterations,
for $k=5$ in most of our experiments.

\begin{table}[h]
\caption{Number of Iterations and Error Bounds for 
Algorithm \ref{alg2} on Random Polynomials}
\label{tabhank}
  \begin{center}

    \begin{tabular}{| c | c | c | c | c | c | }

\hline

\bf{n} & \bf{r} & \bf{Iteration-mean} & \bf{Iteration-std} & \bf{Error-mean} & \bf{Error-std} \\ \hline

$50$ & $8$ & $7.44$ & $1.12$ & $4.18\times 10^{-6}$ & $1.11\times 10^{-5}$\\ \hline
$100$ & $8$ & $8.76$ & $1.30$ & $5.90\times 10^{-6}$ & $1.47\times 10^{-5}$\\ \hline
$150$ & $8$ & $9.12$ & $0.88$ & $2.61\times 10^{-5}$ & $1.03\times 10^{-4}$\\ \hline
$200$ & $8$ & $9.64$ & $0.86$ & $1.48\times 10^{-6}$ & $5.93\times 10^{-6}$\\ \hline
$250$ & $8$ & $9.96$ & $0.73$ & $1.09\times 10^{-7}$ & $5.23\times 10^{-5}$\\ \hline

$50$ & $12$ & $7.16$ & $0.85$ & $3.45\times 10^{-4}$ & $9.20\times 10^{-4}$\\ \hline
$100$ & $12$ & $8.64$ & $1.15$ & $1.34\times 10^{-5}$ & $2.67\times 10^{-5}$\\ \hline
$150$ & $12$ & $9.12$ & $2.39$ & $3.38\times 10^{-4}$ & $1.08\times 10^{-3}$\\ \hline
$200$ & $12$ & $9.76$ & $2.52$ & $6.89\times 10^{-6}$ & $1.75\times 10^{-5}$\\ \hline
$250$ & $12$ & $10.04$ & $1.17$ & $1.89\times 10^{-5}$ & $4.04\times 10^{-5}$\\ \hline

$50$ & $16$ & $7.28$ & $5.06$ & $3.67\times 10^{-3}$ & $7.62\times 10^{-3}$\\ \hline
$100$ & $16$ & $10.20$ & $5.82$ & $1.44\times 10^{-3}$ & $4.51\times 10^{-3}$\\ \hline
$150$ & $16$ & $15.24$ & $6.33$ & $1.25\times 10^{-3}$ & $4.90\times 10^{-3}$\\ \hline
$200$ & $16$ & $13.36$ & $5.38$ & $1.07\times 10^{-3}$ & $4.72\times 10^{-3}$\\ \hline
$250$ & $16$ & $13.46$ & $6.23$ & $1.16\times 10^{-4}$ & $2.45\times 10^{-4}$\\ \hline

\end{tabular}

  \end{center}

\end{table}

\begin{table}[h]
\caption{Number of Iterations and Error Bounds for Algorithm 
 \ref{alg2} on Random Matrices}
\label{tabhank2}
  \begin{center}

    \begin{tabular}{| c | c | c | c | c | c | }

\hline

\bf{n} & \bf{r} & \bf{Iteration-mean} & \bf{Iteration-std} & \bf{Error-mean} & \bf{Error-std} \\ \hline

$50$ & $8$ & $10.02$ & $1.83$ & $5.51\times 10^{-11}$ & $1.65\times 10^{-10}$\\ \hline
$100$ & $8$ & $10.81$ & $2.04$ & $1.71\times 10^{-12}$ & $5.24\times 10^{-12}$\\ \hline
$150$ & $8$ & $14.02$ & $2.45$ & $1.31\times 10^{-13}$ & $3.96\times 10^{-13}$\\ \hline
$200$ & $8$ & $12.07$ & $0.94$ & $2.12\times 10^{-11}$ & $6.70\times 10^{-11}$\\ \hline
$250$ & $8$ & $13.59$ & $1.27$ & $2.75\times 10^{-10}$ & $8.14\times 10^{-10}$\\ \hline

$50$ & $12$ & $10.46$ & $1.26$ & $1.02\times 10^{-12}$ & $2.61\times 10^{-12}$\\ \hline
$100$ & $12$ & $10.60$ & $1.51$ & $1.79\times 10^{-10}$ & $3.66\times 10^{-10}$\\ \hline
$150$ & $12$ & $11.25$ & $1.32$ & $5.69\times 10^{-8}$ & $1.80\times 10^{-7}$\\ \hline
$200$ & $12$ & $12.36$ & $1.89$ & $7.91\times 10^{-10}$ & $2.50\times 10^{-9}$\\ \hline
$250$ & $12$ & $11.72$ & $1.49$ & $2.53\times 10^{-12}$ & $3.84\times 10^{-12}$\\ \hline

$50$ & $16$ & $10.10$ & $1.45$ & $1.86\times 10^{-9}$ & $5.77\times 10^{-9}$\\ \hline
$100$ & $16$ & $11.39$ & $1.70$ & $1.37\times 10^{-10}$ & $2.39\times 10^{-10}$\\ \hline
$150$ & $16$ & $11.62$ & $1.78$ & $1.49\times 10^{-11}$ & $4.580\times 10^{-11}$\\ \hline
$200$ & $16$ & $11.88$ & $1.32$ & $1.04\times 10^{-12}$ & $2.09\times 10^{-12}$\\ \hline
$250$ & $16$ & $12.54$ & $1.51$ & $3.41\times 10^{-11}$ & $1.08\times 10^{-10}$\\ \hline

\end{tabular}

  \end{center}

\end{table}

\clearpage

\subsection{Tests for the Stabilized Matrix Sign Iterations (Algorithm \ref{alg2a}) \\
Applied to Polynomials}

We tested Algorithm  \ref{alg2a} on various modified benchmark polynomials from the website of MPSolve (http://numpi.dm.unipi.it/mpsolve-2.2/). 
With the exception of the polynomials of Type IV below, 
we tested benchmark polynomials that had only trivial real roots 0 and $\pm 1$,
and  we
multiplied them by Chebyshev polynomials
of degree $r$, for $r=8,12$, and $16$, which have only
 real roots.

Having generated such a polynomial $p=p(x)$ 
and its companion matrix $C_{p}$, 
we computed the condition numbers of the matrices
$N_{k}=C_{p}+2^{7+k}I_{n}$ with $k=1,2,\dots$ and selected 
 an 
integer $k$ such that $\kappa(N_{k})<10^{5}$. 
Clearly, this is ensured for sufficiently 
large integers $k$ defining 
  diagonally dominant matrices $N_{k}$,
but in our tests $k$ was less than five in most cases. 

Having fixed $k$ and $N_k$ and following 
the description of Algorithm \ref{alg2a}, we computed at first the matrices
$Y_{1}=\alpha I_{n}+N_{k}$ and $Y_{2}=\alpha I_{n}-N_{k}$, for $\alpha=0.0001 \sqrt {-1}$, and 
then successively the matrices
 $Y_{i+1,j}=\frac{1}{2}(Y_{i,j}-Y_{i,j}^{-1})$ with $Y_{0,j}=Y_{j}$, for $j=1,2$
(cf. Algorithms  \ref{alg2} and  \ref{alg2a}). 
 
We have observed  that with our real shifts
%%V
by $2^{7+k}I_{n}$ at the initial stage, 
non-real eigenvalues of $Y_{1}$ and $Y_{2}$ were never close to $\pm \sqrt {-1}$ 
at the first $7+k$ iterations. 
So we began checking convergence
only when we
have performed these initial iterations, and 
since that moment 
we checked convergence in every five iterations. 
As soon as we observed that 
$\nrank(Y'_{i})=r$, for
$Y'_{i}=Y_{i,1}+Y_{i,2}$ and for
$r$ denoting the number of distinct real roots of $p(x)$,
$r=8,12$,$16$, 
 we stopped  the iterations and moved to
the final stage of the algorithm, that is,
approximated the real eigenvalues of matrix $C_{p}$, equal to the real roots of
the polynomial $p(x)$.

We have run numerical tests on polynomials of  five types having
degree $n=64,128,256,512,1024$, and we compared our results with the outputs 
of MATLAB function "roots()". 
We defined polynomial $p(x)$ of Types I--III and V
as the products $p(x)=p_{1}(x)p_{2}(x)$ where $p_{1}(x)$ is the $r$-th  degree Chebyshev polynomial and $p_2(x)$ are the following polynomials:

I. $p_{2}(x)=x^{n-r}-1$,

II. $p_{2}(x)=1+2x+3x^{2}+\cdots+(n-r+1)x^{n-r}$,

III. $p_{2}(x)=(x+1)(x+a)(x+a^{2})\cdots(x+a^{n-r-1})$, with $a=\frac{i}{100}$,
and

V. $p_{2}(x)=\sum^{n}_{k=0}a_{k}x^{k}$, with $a_{0},\dots,a_{n}$ being i.i.d. standard random variables. \\
We also tested the following polynomials of Type IV,

IV. $p(x)=x^{n-r}-(ax-1)^{3}$, where $a=60,80,100$.

 Tables \ref{test1}--\ref{test5} display 
the number of iterations and the maximum error bounds, 
for the polynomials of Types I--IV
(cf. our Remark \ref{renrei}). 
Table \ref{test2a} shows 
the average error bounds and the average numbers of iterations
in 50 tests with the polynomials of Type V.

\begin{table}[!hbp]
\caption{Number of Iterations and Error Bounds for Algorithm \ref{alg2a} on Type I Polynomials}
\label{test1}
  \begin{center}

    \begin{tabular}{| c | c | c | c | c | c | }
\hline
\bf{n} & \bf{r} & \bf{Iterations} & \bf{Errors} \\ \hline

$64$&		$8$&		$10$&$	1.03E-10$ \\ \hline
$64$&		$12$&		$23$&$	1.32E-08$ \\ \hline
$64$&		$16$&		$23$&$	3.97E-06$ \\ \hline	
$128$&	$8$	&	$10$&$	1.60E-10$ \\ \hline
$128$&	$12$&		$23$&$	4.91E-04$ \\ \hline
$128$&	$16$&		$23$&$	2.22E-03$ \\ \hline
$256$&	$8$&		$10$&$	6.18E-06$ \\ \hline
$256$&	$12$&		$28$&$	1.75E-09$ \\ \hline
$256$&	$16$&		$28$&$	3.54E-06$ \\ \hline
$512$&	$8$&		$15$&$	8.05E-13$ \\ \hline
$512$&	$12$&		$28$&$	1.71E-08$ \\ \hline
$512$&	$16$&		$28$&$	2.78E-05$ \\ \hline
$1024$&	$8$&		$15$&$	2.33E-12$ \\ \hline
$1024	$&	$12$&		$28$&$	1.27E-09$ \\ \hline
$1024	$&	$16$&		$28$&$	2.19E-05$ \\ \hline

\end{tabular}

  \end{center}

\end{table}

\begin{table}[!hbp]
\caption{Number of Iterations and Error Bounds for Algorithm \ref{alg2a} on Type II Polynomials}
\label{test2}
  \begin{center}

    \begin{tabular}{| c | c | c | c | c | c | }
\hline
\bf{n} & \bf{r} & \bf{Iterations} & \bf{Errors} \\ \hline

$64$&		$8$&		$10$&$	1.53E-11$ \\ \hline
$64$&		$12$&		$23$&$	1.30E-07$ \\ \hline
$64$&		$16$&		$23$&$	1.40E-05$ \\ \hline	
$128$&	$8$	&	$28$&$	9.42E-11$ \\ \hline
$128$&	$12$&		$10$&$	7.51E-08$ \\ \hline
$128$&	$16$&		$28$&$	2.27E-04$ \\ \hline
$256$&	$8$&		$28$&$	1.92E-11$ \\ \hline
$256$&	$12$&		$28$&$	2.21E-07$ \\ \hline
$256$&	$16$&		$28$&$	1.69E-03$ \\ \hline
$512$&	$8$&		$28$&$	3.68E-12$ \\ \hline
$512$&	$12$&		$28$&$	2.17E-06$ \\ \hline
$512$&	$16$&		$33$&$	1.53E-02$ \\ \hline
$1024$&	$8$&		$28$&$	2.96E-11$ \\ \hline
$1024	$&	$12$&		$33$&$	5.00E-07$ \\ \hline
$1024	$&	$16$&		$33$&$	3.58E-03$ \\ \hline

\end{tabular}

  \end{center}

\end{table}

\begin{table}[!hbp]
\caption{Number of Iterations and Error Bounds for Algorithm 
\ref{alg2a} on Type III Polynomials}
\label{test3}
  \begin{center}

    \begin{tabular}{| c | c | c | c | c | c | }
\hline
\bf{n} & \bf{r} & \bf{Iterations} & \bf{Errors} \\ \hline

$64$&		$8$&		$28$&$	4.63E-11$ \\ \hline
$64$&		$12$&		$23$&$	1.69E-07$ \\ \hline
$64$&		$16$&		$28$&$	7.36E-06$ \\ \hline	
$128$&	$8$	&	$28$&$	3.83E-12$ \\ \hline
$128$&	$12$&		$23$&$	1.45E-08$ \\ \hline
$128$&	$16$&		$28$&$	1.68E-05$ \\ \hline
$256$&	$8$&		$28$&$	1.58E-12$ \\ \hline
$256$&	$12$&		$23$&$	1.02E-04$ \\ \hline
$256$&	$16$&		$28$&$	6.50E-04$ \\ \hline
$512$&	$8$&		$28$&$	7.69E-13$ \\ \hline
$512$&	$12$&		$23$&$	5.00E-09$ \\ \hline
$512$&	$16$&		$28$&$	8.60E-06$ \\ \hline
$1024$&	$8$&		$28$&$	9.90E-14$ \\ \hline
$1024	$&	$12$&		$23$&$	1.45E-09$ \\ \hline
$1024	$&	$16$&		$28$&$	2.64E-05$ \\ \hline

\end{tabular}

  \end{center}

\end{table}

\begin{table}[!hbp]
\caption{Number of Iterations and Error Bounds Algorithm 
\ref{alg2a} on Type IV Polynomials}
\label{test5}
  \begin{center}

    \begin{tabular}{| c | c | c | c | c | c | }
\hline
\bf{n} & \bf{a} & \bf{Iterations} & \bf{Errors} \\ \hline

$64$&		$60$&		$41$&$	2.43E-04$ \\ \hline
$64$&		$80$&		$42$&$	7.98E-04$ \\ \hline
$64$&		$100$&	$43$&$	1.72E-05$ \\ \hline	
$128$&	$60$	&	$41$&$	1.12E-03$ \\ \hline
$128$&	$80$&		$42$&$	4.43E-04$ \\ \hline
$128$&	$100$&	$43$&$	1.31E-04$ \\ \hline
$256$&	$60$&		$41$&$	2.10E-04$ \\ \hline
$256$&	$80$&		$42$&$	1.91E-04$ \\ \hline
$256$&	$100$&	$43$&$	1.34E-04$ \\ \hline
$512$&	$60$&		$41$&$	3.37E-04$ \\ \hline
$512$&	$80$&		$42$&$	1.80E-04$ \\ \hline
$512$&	$100$&	$43$&$	8.33E-05$ \\ \hline
$1024$&	$60$&		$36$&$	1.10E-01$ \\ \hline
$1024	$&	$80$&		$42$&$	1.16E-04$ \\ \hline
$1024	$&	$100$&	$43$&$	1.76E-04$ \\ \hline

\end{tabular}

  \end{center}

\end{table}

\begin{table}[!hbp]
\caption{Number of Iterations and Error Bounds for Algorithm 
\ref{alg2a} on Type V Polynomials}
\label{test2a}
  \begin{center}

    \begin{tabular}{| c | c | c | c | c | c | }
\hline
\bf{n} & \bf{r} & \bf{Iterations} & \bf{Errors} \\ \hline

$128$&	$8$&		$22.3$&$	5.33E-06$ \\ \hline
$128$&	$12$&		$24.6$&$	4.85E-05$ \\ \hline
$128$&	$16$&		$24.94$&$	3.59E-03$ \\ \hline	
$256$&	$8$	&	$26.02$&$	1.11E-06$ \\ \hline
$256$&	$12$&		$27.01$&$	2.37E-05$ \\ \hline
$256$&	$16$&		$30.18$&$	1.80E-03$ \\ \hline
$512$&	$8$&		$27.54$&$	2.73E-08$ \\ \hline
$512$&	$12$&		$28.00$&$	2.27E-06$ \\ \hline
$512$&	$16$&		$38.18$&$	2.39E-03$ \\ \hline

\end{tabular}

  \end{center}

\end{table}

\clearpage

\subsection{Tests for the Stabilized Matrix Sign Iterations (Algorithm \ref{alg2a}) on Gaussian Random Matrices}

We tested Algorithm \ref{alg2a} on randomly generated matrices of two types:

Type I: Gaussian random tridiagonal matrices of dimension $n=64, 128, 256, 512, 1024$. 
We generated each entry in the tridiagonal part independently  by using standard Gaussian distribution and set the other entries to 0. Our tables show the error bounds equal to the maximal difference of the outputs of our algorithm and MATLAB function "eig()". 
We generated 100 matrices, for each $n$, and  recorded the mean and standard deviation of 
the error bounds and of the numbers of iterations.

Type II: Random matrices $A$ with a fixed number of real eigenvalues. At first we generated a diagonal matrix $\Sigma$ with $r$ diagonal 
entries under the standard real Gaussian distribution 
and $n-r$ diagonal entries under the standard 
complex Gaussian distribution,
 for $n = 64, 128, 256, 512, 1024$ and $r = 8, 12, 16$. Then we generated a   standard Gaussian random orthogonal matrix $Q$. Finally 
we computed the matrices $A=Q^{T}\Sigma Q$.
 We generated 100 such matrices $A$, for each pair of $n$ and $r$,
and  recorded the mean and standard deviation of the error bounds and of the numbers of iterations.

The following two tables summarize the performance data, showing  a low number of iterations required
for ensuring  
the approximation of the eigenvalues with a  reasonable precision. 

\begin{table}[!hbp]
\caption{Number of Iterations and Error Bounds for Root-finding Algorithm  \ref{alg2a} on Type I matrices}
\label{test10}
  \begin{center}

    \begin{tabular}{| c | c | c | c | c | }
\hline
\bf{n} & \bf{Iteration-mean} & \bf{Iteration-std} & \bf{Error-mean} & \bf{Error-std} \\ \hline

$64$&		$10.70$&$2.36$& 	$1.78E-06$& $1.14E-05$ \\ \hline	
$128$&	$12.16$&$3.34$&	$5.68E-07$& $4.49E-06$ \\ \hline
$256$&	$12,97$&$3.97$&	$3.26E-06$& $1.35E-05$ \\ \hline
$512$&	$15.46$&$9.82$&	$8.80E-04$& $8.44E-03$ \\ \hline
$1024$&	$16.52$&$10.26$&	$2.43E-03$& $2.25E-02$ \\ \hline

\end{tabular}

  \end{center}

\end{table}

\begin{table}[!hbp]
\caption{Number of Iterations and Error Bounds for  Algorithm \ref{alg2a} on Type II matrices}
\label{test11}
  \begin{center}

    \begin{tabular}{| c | c | c | c | c | c |  c |}
\hline
\bf{n} & \bf{r} & \bf{Iteration-mean} & \bf{Iteration-std} & \bf{Error-mean} & \bf{Error-std} \\ \hline

$64$&		$8$&		$11.65$&$2.47$&$	3.69E-08$&$2.29E-07$ \\ \hline
$64$&		$12$&		$11.75$&$2.50$&$	3.98E-10$&$2.71E-09$ \\ \hline
$64$&		$16$&		$11.60$&$2.45$&$	4.10E-09$&$3.88E-08$ \\ \hline	
$128$&	$8$&	           $13.75$&$2.79$&$ 1.17E-08$&$7.56E-08$ \\ \hline
$128$&	$12$&		$13.70$&$2.90$&$	4.41E-09$&$2.73E-08$ \\ \hline
$128$&	$16$&		$13.65$&$2.55$&$	1.23E-07$&$1.34E-06$ \\ \hline
$256$&	$8$&		$14.55$&$3.26$&$	5.59E-09$&$5.58E-08$ \\ \hline
$256$&	$12$&		$14.15$&$3.70$&$	1.38E-07$&$1.38E-06$ \\ \hline
$256$&	$16$&		$14.70$&$2.54$&$	3.06E-11$&$1.93E-10$ \\ \hline
$512$&	$8$&		$13.65$&$5.59$&$	5.08E-10$&$4.88E-09$ \\ \hline
$512$&	$12$&		$15.65$&$9.47$&$	7.46E-04$&$7.46E-03$ \\ \hline
$512$&	$16$&		$16.55$&$10.26$&$2.78E-03$&$5.47E-03$ \\ \hline
$1024$&	$8$&		$18.20$&$15.35$&$2.33E-10$&$1.22E-09$ \\ \hline
$1024	$&	$12$&		$20.85$&$17.60$&$1.27E-07$&$3.36E-07$ \\ \hline
$1024	$&	$16$&		$24.35$&$19.56$&$2.19E-03$&$4.33E-03$ \\ \hline

\end{tabular}

  \end{center}

\end{table}

\subsection{Tests for the Hybrid Matrix Algorithm (Algorithm \ref{alghbr}) on Benchmark Polynomials}\label{ststshbr}

We performed numerical tests of a hybrid algorithm. We began with 
 Algorithm \ref{alg2} and after sufficiently many iterations continued with
its variation avoiding matrix inversion. 

Namely, we first applied a real shift $\beta I$ to the companion matrix $C_{p}$, such that 
the matrix $M = C_{p}+\beta I$ had condition number less than $10^{5}$. 
Based on our previous tests,  we expected
that, for such inputs, at least $T = \log_{2}\beta$ iterations $M_{i+1} = \frac{1}{2}(M_{i} - M_{i}^{-1})$  
would be required in order to move
 the complex nonreal eigenvalues close enough to $\pm\sqrt{-1}$. After the first $T$ iterations, we periodically (in every 5 iterations) 
applied two iterations $M_{i+1} = \frac{1}{2}(M_{i}^{3} + 3M_{i})$, 
which converged with  cubic rate provided that all complex eigenvalues have distance less than $\frac{1}{2}$ from $\sqrt{-1}$ or $-\sqrt{-1}$. Before switching to the  iterations of 
the second type,
we performed the following transformation in order to avoid problems of numerical stability:

Step 1: Compute $P = \frac{0.5M + \sqrt{-1}~I}{0.5M + \sqrt{-1}~I}$, which maps the real line into the unit circle.

Step 2: Compute $Y = \frac{2\sqrt{-1}}{3}(P - P^{-1})$, mapping the unit circle onto the interval $[-2/3, 2/3]$.

Note that these two maps together keep the values $\pm\sqrt{-1}$ unmoved.

We tested polynomials of Types II and  IV of the previous section.  
For polynomials of Types I, III, and V, the test results
 were similar to those for 
polynomials of Type II, apparently
due to the shared Chebyshev factors. The test results on Type IV polynomials indicate the strength of this algorithm in the case of clustered roots.

%Type I: $p(x)=p_{1}(x)p_{2}(x)$, where $p_{1}(x)$ is $r$-th degree 
%Chebyshev polynomial, $r=8,12,16$, $p_{2}(x)=1+2x+3x^{2}+\cdots+(n-r+1)x^{n-r}$, 
%and $n = 64, 128, 256, 512, 1024$.
%
%Type II: Mignotte-like polynomials $p(x)=x^{n}-(100x-1)^{3}$, where $n = 64, 128, 256, 512, 1024$.

The number of iterations required and the error bound are displayed in the tables below.

\begin{table}[!hbp]
\caption{Number of Iterations and Error Bounds for Hybrid Algorithm on Type II Polynomials}
\label{test12}
  \begin{center}

    \begin{tabular}{| c | c | c | c | c | c | }
\hline
\bf{n} & \bf{r} & \bf{Iterations} & \bf{Errors} \\ \hline

$64$&		$8$&		$10$&$	3.69E-10$ \\ \hline
$64$&		$12$&		$23$&$	4.96E-08$ \\ \hline
$64$&		$16$&		$23$&$	4.97E-03$ \\ \hline	
$128$&	$8$	&	$10$&$	2.28E-11$ \\ \hline
$128$&	$12$&		$28$&$	1.97E-07$ \\ \hline
$128$&	$16$&		$23$&$	8.68E-02$ \\ \hline
$256$&	$8$&		$28$&$	6.56E-12$ \\ \hline
$256$&	$12$&		$28$&$	3.64E-07$ \\ \hline
$256$&	$16$&		$28$&$	3.82E-04$ \\ \hline
$512$&	$8$&		$15$&$	8.05E-13$ \\ \hline
$512$&	$12$&		$28$&$	1.71E-08$ \\ \hline
$512$&	$16$&		$28$&$	2.78E-05$ \\ \hline
$1024$&	$8$&		$28$&$	3.72E-11$ \\ \hline
$1024	$&	$12$&		$28$&$	1.09E-08$ \\ \hline
$1024	$&	$16$&		$33$&$	2.19E-05$ \\ \hline

\end{tabular}

  \end{center}

\end{table}

\begin{table}[!hbp]
\caption{Number of Iterations and Error Bounds for Hybrid Algorithm on Type IV Polynomials}
\label{test13}
  \begin{center}

    \begin{tabular}{| c | c | c | c | c | c | }
\hline
\bf{n} & \bf{Iterations} & \bf{Errors} \\ \hline

$64$&		$33$&$	7.32E-05$ \\ \hline
$128$&	$33$&$	6.12E-06$ \\ \hline
$256$&	$38$&$	1.60E-05$ \\ \hline
$512$&	$38$&$	1.08E-04$ \\ \hline
$1024$&	$38$&$	9.19E-01$ \\ \hline

\end{tabular}

  \end{center}

\end{table}

\subsection{Tests for the Modular Square Root Iterations (Algorithm \ref{alg3})}

Table \ref{test16} displays our test results for  Algorithm  \ref{alg3}, that is, 
for the iterations $f_{i+1}(x) \equiv \frac{1}{2}(f_i(x) - f_i(x)^{-1})\mod p(x)$,
which computed polynomial inverses modulo $p(x)$ by solving the 
associated Sylvester linear systems of equations. 
We applied the tests  to polynomials of Types I and II.

Already after a small number of iterations, that is, for small integers $i$,
the tests have consistently
 produced polynomials
 $f_i(x)$ whose roots approximated  the complex roots of 
the polynomial $p(x)$ of (\ref{eqpoly})
within the fixed 
tolerance bound $\epsilon = 10^{-5}$.   
At this stage of our tests we 
applied the MATLAB function "roots()" 
in order to avoid actual computation of agcds.
Namely, as soon as we observed that
the polynomial $p(x)$ shared all its complex roots with
the polynomial $f_i(x)$, we stopped the iterations.

\begin{table}[!hbp]
\caption{Number of Iterations for Algorithm  \ref{alg3} on Polynomials of Types I and II}
\label{test16}
  \begin{center}

    \begin{tabular}{| c | c | c | c | c | c | }
\hline
n	&	r	&     Type I & Type II	\\ \hline
64	&	8	&	9   &	14  \\ \hline
64	&	12	&	4   &	16  \\ \hline
64	&	16	&	2   &	17  \\ \hline
128	&	8	&	9   &	14  \\ \hline
128	&	12	&	12  &	16  \\ \hline
128	&	16	&	2   &	17  \\ \hline
256	&	8	&	9   &	14  \\ \hline
256	&	12	&	12  &	16  \\ \hline
256	&	16	&	8   &	17  \\ \hline
512	&	8	&	9   &	14  \\ \hline
512	&	12	&	12  &	16  \\ \hline
512	&	16	&	8   &	17  \\ \hline
1024	&	8	&	10  & 	14  \\ \hline
1024	&	12	&	12  &	16  \\ \hline
1024	&	16	&	11  &	17  \\ \hline

\end{tabular}

  \end{center}

\end{table}

%------------------------------------------------------------------------------

%------------------------------------------------------------------------------

{\bf Acknowledgements:} 
  This work has been supported by NSF Grant CCF--1116736 
and PSC CUNY Award 67699-00 45. 
We are also grateful 
to Dario A. Bini and two anonymous reviewers, 
for thoughtful and  helpful comments and to
Ioannis Z. Emiris and Bernard Mourrain 
for pointing us out the bibliography on the 
distribution of real roots of a polynomial.

%------------------------------------------------------------------------------

\bigskip
%\bigskip
%------------------------------------------------------------------------------

%\clearpage

%------------------------------------------------------------------------------

{\bf {\LARGE {Appendix}}}
\appendix 

%\medskip
%------------------------------------------------------------------------------

%------------------------------------------------------------------------------

\section{Some Maps of the Variables and the Roots}\label{smpgvr}

%------------------------------------------------------------------------------

Some basic maps of the roots of a polynomial can be computed
at a linear or nearly linear arithmetic cost.
 
\begin{theorem}\label{thshsc} ({\em  Root Inversion, Shift and Scaling}, cf. \cite{P01}.)

(i) Given a polynomial $p(x)$ of (\ref{eqpoly}) and two scalars $a$ and $b$, one can
compute the coefficients of the polynomial $q(x)=p(ax+b)$ by using
$O(n \log (n))$  arithmetic operations. 
This bound decreases to $2n-1$ multiplications if $b=0$.  

(ii) Reversing a polynomial inverts all its roots by involving no flops,
that is, $$p_{\rm rev}(x)=x^np(1/x)=\sum_{i=0}^np_ix^{n-i}=p_n\prod_{j=1}^n(1-xx_j).$$
\end{theorem}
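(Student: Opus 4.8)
The plan is to handle the two parts separately: part (ii) is a bare algebraic identity, and part (i) reduces, after an easy scaling step, to a single FFT-based polynomial multiplication.

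For part (ii) I would simply expand $x^np(1/x)=x^n\sum_{i=0}^np_ix^{-i}=\sum_{i=0}^np_ix^{n-i}$, which shows that $p_{\rm rev}$ is obtained from $p$ by reversing the coefficient list and therefore costs no arithmetic operation. Substituting the factored form $p(x)=p_n\prod_{j=1}^n(x-x_j)$ and distributing $x^n=\prod_{j=1}^nx$ across the product gives $p_{\rm rev}(x)=p_n\prod_{j=1}^n(1-xx_j)$, so each nonzero root $x_j$ of $p$ is sent to its reciprocal $1/x_j$.

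For part (i) I would first dispose of the scaling case $b=0$: here $q(x)=p(ax)=\sum_{i=0}^np_ia^ix^i$, so I form $a^2,\dots,a^n$ by $n-1$ successive multiplications and then the $n$ products $p_ia^i$ for $i=1,\dots,n$ (the constant term is unchanged), for a total of $2n-1$ multiplications. For general $b$ I would use $p(ax+b)=r(ax)$ with $r(x)=p(x+b)$, so it suffices to compute the Taylor shift $r(x)=p(x+b)$ within $O(n\log n)$ operations and then apply the scaling step at an extra cost of $O(n)$. The core point is to expose the convolution structure of the Taylor shift: by the binomial theorem the coefficient of $x^j$ in $p(x+b)$ is $\sum_{i\ge j}\binom{i}{j}b^{\,i-j}p_i$; multiplying by $j!$ and setting $u_i=i!\,p_i$ and $w_m=b^m/m!$, this becomes the correlation $c_j=\sum_{m\ge 0}u_{j+m}w_m$, which after reversing the sequence $(u_i)$ (put $\tilde u_k=u_{n-k}$) is the convolution $c_j=(\tilde u * w)_{n-j}$. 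Hence all the $c_j$ arise from one multiplication of two polynomials of degree $\le n$, performed by FFT in $O(n\log n)$ operations, and dividing each $c_j$ by $j!$ recovers the coefficients of $r(x)$. The factorials $0!,\dots,n!$, their reciprocals, and the powers $1,b,\dots,b^n$ are precomputed with $O(n)$ operations, and the final divisions add $O(n)$ more, so the whole computation of $p(ax+b)$ costs $O(n\log n)$.

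The main obstacle --- really the only non-routine point --- is recognizing that the factorial-rescaled shifted coefficients form a correlation of two explicit length-$(n+1)$ sequences and that a single index reversal converts this correlation into a convolution evaluable by one FFT-based product; everything else is bookkeeping. Over $\mathbb R$ or $\mathbb C$ the factorials are invertible, so the rescaling is harmless; if one were worried about the magnitude of $i!$ in numerical work one could instead perform $O(\log n)$ rounds of a split-and-recombine identity for $p(x+b)$, but the single-convolution route already attains the stated bound.
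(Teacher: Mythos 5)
Your proof is correct. The paper itself gives no argument for this theorem; it is stated as a known fact with a pointer to \cite{P01}, and your write-up is precisely the classical argument behind that citation: the coefficient-reversal identity for part (ii), the $2n-1$ multiplications for pure scaling, and the Taylor shift $p(x+b)$ computed by rescaling with factorials, reversing one sequence, and performing a single FFT-based convolution in $O(n\log n)$ operations, after which $q(x)=p(ax+b)=r(ax)$ costs only $O(n)$ more. The only cosmetic point is that your $c_j$ denotes $j!$ times the $j$-th coefficient of $r(x)$, which you use consistently, so nothing needs repair.
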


Note that by shifting and scaling
the variable,
 we can 
move
 all roots of $p(x)$ into a fixed disc, e.g., $D(0,1)=\{x:~|x|\le 1\}$.

\begin{theorem}\label{thdnd} ({\em Dandelin's Root Squaring}, cf. \cite{H59}.)

(i) Let a polynomial $p(x)$ of (\ref{eqpoly}) be monic.
Then $q(x)=(-1)^np(\sqrt{x}~)p(-\sqrt{x}~)=\prod_{j=1}^n(x-x_{j}^2)$.

(ii) One can
evaluate $p(x)$ at the $k$-th roots of unity, for $k>2n$, and then
interpolate to $q(x)$ by using $O(k\log (k))$ 
 arithmetic operations  overall.
\end{theorem}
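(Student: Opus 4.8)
For part (i), the plan is to compute directly from the monic factorization $p(x)=\prod_{j=1}^{n}(x-x_j)$. Substituting $\pm\sqrt{x}$ and pulling out signs gives
\[
p(\sqrt{x})\,p(-\sqrt{x})=\prod_{j=1}^{n}(\sqrt{x}-x_j)\cdot(-1)^{n}\prod_{j=1}^{n}(\sqrt{x}+x_j)=(-1)^{n}\prod_{j=1}^{n}(x-x_j^{2}),
\]
so multiplying by $(-1)^{n}$ yields exactly $q(x)=\prod_{j=1}^{n}(x-x_j^{2})$. To make rigorous that the middle expression is a genuine polynomial in $x$ (and not merely in $\sqrt{x}$), I would also record the even/odd split $p(x)=p_{\rm e}(x^{2})+x\,p_{\rm o}(x^{2})$, from which $p(\sqrt{x})\,p(-\sqrt{x})=p_{\rm e}(x)^{2}-x\,p_{\rm o}(x)^{2}$; this is visibly a polynomial in $x$, it is monic of degree $n$ after multiplication by $(-1)^{n}$ (checking the leading term separately for $n$ even and $n$ odd), and it coincides with $\prod_j(x-x_j^{2})$ by the computation above. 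No step here is an obstacle.

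For part (ii), the plan is an FFT-based evaluation–interpolation scheme resting on the identity of part (i). Pick $k>2n$ a power of two and let $\omega=\exp(2\pi\sqrt{-1}/k)$ be a primitive $k$th root of unity. First evaluate $p$ at all $k$th roots of unity $\omega^{0},\dots,\omega^{k-1}$ by a single length-$k$ FFT, in $O(k\log k)$ operations. The key observation is that the even powers $\omega^{2j}$, $j=0,\dots,k/2-1$, are precisely the $(k/2)$th roots of unity, and that the two square roots of $\omega^{2j}$ are $\omega^{j}$ and $-\omega^{j}=\omega^{j+k/2}$; hence part (i) gives $q(\omega^{2j})=(-1)^{n}\,p(\omega^{j})\,p(\omega^{j+k/2})$. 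Thus the $k/2$ needed values of $q$ are obtained from the already-computed values of $p$ by $O(k)$ multiplications, with no further polynomial evaluations. Since $\deg q=n<k/2$, these values determine $q$, and I recover its coefficients by one inverse length-$(k/2)$ FFT, again $O(k\log k)$. Adding the three stages gives the claimed $O(k\log k)$ bound.

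The only point requiring care — the closest thing to an obstacle — is the branch bookkeeping for the square root in part (ii): one must verify that the two preimages of $\omega^{2j}$ under $x\mapsto x^{2}$ are exactly $\omega^{j}$ and $\omega^{j+k/2}$, so that $q$ at the $(k/2)$th roots of unity is expressed purely through values of $p$ at the $k$th roots of unity. This is also exactly what forces the hypothesis $k>2n$ rather than the weaker $k>n$: it is $k/2$, the number of interpolation nodes available for $q$, that must exceed $\deg q=n$.
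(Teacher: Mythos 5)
Your proof is correct. The paper states this theorem without proof, citing the classical reference \cite{H59}, and your argument is exactly the intended one: the factorization computation plus the even/odd split $p(x)=p_{\rm e}(x^{2})+x\,p_{\rm o}(x^{2})$ for part (i), and the standard FFT evaluation--interpolation scheme for part (ii), where the pairing $-\omega^{j}=\omega^{j+k/2}$ (requiring $k$ even, e.g.\ a power of two) reduces the values of $q$ at the $(k/2)$-th roots of unity to the already-computed values of $p$, and $k>2n$ is precisely what makes $k/2>\deg q$ so that the inverse FFT recovers $q$. Your closing remark correctly identifies why the hypothesis is $k>2n$ rather than $k>n$.
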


%------------------------------------------------------------------------------

\begin{remark}\label{rercrs} 
Recursive root-squaring is prone to numerical stability problems 
because the coefficients of the iterated polynomials very quickly span many orders of magnitude.  
It is somewhat  surprising, but the Boolean complexity of
the recursive root-squaring process is relatively reasonable 
if high output precision
is required
  \cite{P95},
\cite{P02}.
Moreover, one can avoid numerical  stability problems 
and perform all iterations with double precision
 by applying a special tangential representation of the coefficients and 
of the intermediate results proposed in  \cite{MZ01}. In this case 
the computations involve more general operations than flops; in terms of 
the CPU time 
the computational cost per
iteration has the same order as
 $n^2$ flops, performed with double precision. 
\end{remark}

\begin{theorem}\label{thcl} ({\em The Cayley Maps.})
%, cf. \cite{GL96}

(i) The map
$y=(x -a\sqrt{-1}~)/(x+a\sqrt{-1}~)$, for any real nonzero scalar $a$,
sends the real axis $\{x:~x~{\rm is~real}\}$
onto the unit circle $C(0,1)=\{y:~|y|=1\}$.

(ii) The converse map $x=a\sqrt{-1}~(1-y )/(y+1)$ sends 
the unit circle $C(0,1)$ onto the real axis.
\end{theorem}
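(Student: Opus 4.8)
The plan is to verify the two claimed facts by direct computation, treating the map as a Möbius transformation and checking where the relevant curve goes. For part (i), I would write $y = (x - a\sqrt{-1})/(x + a\sqrt{-1})$ with $a$ real and nonzero, and show that $|y| = 1$ precisely when $x$ is real. The quick route: for real $x$, the numerator $x - a\sqrt{-1}$ and the denominator $x + a\sqrt{-1}$ are complex conjugates of each other (since $x$ and $a$ are real), hence have equal modulus, so $|y| = 1$. Conversely, if $|y| = 1$ then $|x - a\sqrt{-1}| = |x + a\sqrt{-1}|$, which says $x$ is equidistant from $a\sqrt{-1}$ and $-a\sqrt{-1}$, i.e. $x$ lies on the perpendicular bisector of the segment joining these two points; since $a \neq 0$ that bisector is exactly the real axis. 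To get \emph{onto} rather than merely \emph{into}, I would note the map is a bijection of the Riemann sphere (it is a non-degenerate Möbius map: determinant $a\sqrt{-1} - (-a\sqrt{-1}) = 2a\sqrt{-1} \neq 0$) carrying the extended real line to a circle or line through $y = 1$ (image of $x = \infty$) and $y = -1$ (image of $x = 0$), and passing through $y = \frac{-a\sqrt{-1} - a\sqrt{-1}}{-a\sqrt{-1}+a\sqrt{-1}}$... better: the image of the real line is a circle since $x = a\sqrt{-1}$ maps to $0$ and no real $x$ does, so the image is a bounded circle, and having shown it lies in $C(0,1)$ and the map is bijective, it must be all of $C(0,1)$.

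For part (ii) I would simply solve $y = (x - a\sqrt{-1})/(x + a\sqrt{-1})$ for $x$: cross-multiplying gives $y(x + a\sqrt{-1}) = x - a\sqrt{-1}$, hence $x(y - 1) = -a\sqrt{-1}(y + 1)$, so $x = a\sqrt{-1}(1 - y)/(y + 1)$, matching the stated formula. Then, since (i) established that the map of part (i) is a bijection from the extended real axis onto $C(0,1)$, its inverse — which is exactly the map of part (ii) — sends $C(0,1)$ onto the real axis. Alternatively, one can check (ii) independently in the same style: for $|y| = 1$ write $y = e^{\sqrt{-1}\theta}$ and compute that $a\sqrt{-1}(1-y)/(1+y) = a\sqrt{-1}\cdot\frac{1 - e^{\sqrt{-1}\theta}}{1 + e^{\sqrt{-1}\theta}} = a\sqrt{-1}\cdot(-\sqrt{-1}\tan(\theta/2)) = a\tan(\theta/2)$, which is real (and ranges over all of $\mathbb{R}$ as $\theta$ runs over $(-\pi,\pi)$), confirming surjectivity onto the real axis.

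There is no real obstacle here; the only point requiring a little care is the surjectivity (``onto'') claims, as opposed to the easier containments, and the handling of the point at infinity. I would address this uniformly by invoking that any Möbius transformation is a bijection of the Riemann sphere $\widehat{\mathbb{C}}$ and maps circles-or-lines to circles-or-lines, so once the image of the real axis (a line, hence a circle-or-line on the sphere) is shown to be contained in the circle $C(0,1)$, it must equal $C(0,1)$, and the inverse map then automatically gives the reverse statement. This keeps the argument to a few lines.
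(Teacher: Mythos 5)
Your verification is correct. Note, though, that the paper itself offers no proof of this theorem: it is stated in the appendix as a classical fact about the Cayley transform, so there is no argument of the authors to compare yours against. Your conjugate-pair observation for the containment, the perpendicular-bisector argument for the converse inclusion, the explicit inversion giving $x=a\sqrt{-1}\,(1-y)/(y+1)$, and the $y=e^{\sqrt{-1}\theta}$ computation yielding $a\tan(\theta/2)$ are all sound, and invoking that a M\"obius map is a bijection of the Riemann sphere taking circlines to circlines correctly settles surjectivity. Two small points: your intermediate remark that the image is a bounded circle ``since $x=a\sqrt{-1}$ maps to $0$ and no real $x$ does'' is a non sequitur --- the relevant fact is that the pole $x=-a\sqrt{-1}$ is not on the extended real line, so $\infty$ is not in the image --- but your final, cleaner argument (a circline contained in $C(0,1)$ must equal it) makes this moot. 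Also, strictly speaking the finite real axis maps onto $C(0,1)\setminus\{1\}$ (the point $y=1$ is the image of $x=\infty$) and the inverse map is undefined at $y=-1$; the theorem's ``onto'' statements follow the usual convention of working on the extended plane, which you acknowledge, so this is only a matter of phrasing.
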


%------------------------------------------------------------------------------

\section{Some Functional Iterations for Polynomial Root-finding}\label{sfitr} 

%------------------------------------------------------------------------------

Newton's and Ehrlich--Aberth's are two celebrated functional iteration
processes for the approximation 
of a single root of a
polynomial $p(x)$ of (\ref{eqpoly}) and all its roots, respectively.
They are highly efficient and popular, but 
 not specialized to our task of approximating real roots,
and we only use them as auxiliary root-refiners. 

Hereafter a disc $D(X,r)$ is said to be $\gamma$-{\em isolated} 
for a polynomial $p(x)$ and $\gamma>1$ if it contains 
all roots of the polynomial lying in the disc $D(X,\gamma r)$.
In this case we say that the disc has 
 {\em isolation ratio} at least $\gamma$.

Newton's iterations refine an approximation $y^{(0)}$
to a single root of a
polynomial $p(x)$ of (\ref{eqpoly}),
 \begin{equation}\label{eqnewt}
y_0=c,~y^{(h+1)}=y^{(h)}-p(y^{(h)})/p'(y^{(h)}),~h=0,1,\dots
\end{equation}  
 
Ehrlich--Aberth's iterations refine $n$ simultaneous approximations 
$z_1^{(0)},\dots, z_1^{(n)}$ 
to all $n$ roots $x_1,\dots,x_n$ of such a polynomial,  
\begin{equation}\label{eqehrab}
z_i^{(h+1)}=z_i^{(h)}-1/e_i^{(h)},~{\rm for}~
 e_i^{(h)}=p(z_i^{(h)})/ p'(z_i^{(h)})-\sum_{j\neq i}\frac{1}{ z_i^{(h)}- z_j^{(h)}},~i=1,\dots,n,
\end{equation}
See  \cite{M07}, \cite{MP13} for various other functional iterations.

As we can see next, both iterative algorithms
 refine  very fast the crude initial approximations
to simple isolated roots of a polynomial.

%------------------------------------------------------------------------------

\begin{theorem}\label{thren} 
 Assume a polynomial $p=p(x)$ of  (\ref{eqpoly})
and let $0<3(n-1)|y_0-x_1|<|y_0-x_j|$, for $j=2,\dots,n$.
  Then  Newton's
  iterations (\ref{eqnewt})
  converge to the root $x_1$ quadratically
%vp $\alpha$
  right from the start, namely,  $|y_k-x_1|\le 2|y_0-x_1|/2^{2^k}$,
for $k=0,1,\dots$.
\end{theorem}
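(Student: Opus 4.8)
The plan is to follow the Newton iterates through the logarithmic derivative of $p$. From $p'(y)/p(y)=\sum_{j=1}^n 1/(y-x_j)$ (valid whenever $p(y)\neq 0$) one gets, writing $d_j:=y^{(h)}-x_j$ and $T_h:=\sum_{j=1}^n 1/d_j$, that $y^{(h+1)}-x_1=d_1-1/T_h=(d_1T_h-1)/T_h$; since $d_1T_h=1+\sum_{j\ge 2}d_1/d_j$ and hence $T_h=(1+\sum_{j\ge 2}d_1/d_j)/d_1$, this collapses to the exact one-step identity
\begin{equation}\label{eqNewtProp}
y^{(h+1)}-x_1=\frac{d_1^{\,2}\,S_h}{1+d_1\,S_h},\qquad S_h:=\sum_{j=2}^{n}\frac{1}{y^{(h)}-x_j},
\end{equation}
whose numerator already exhibits the quadratic factor $d_1^{\,2}=(y^{(h)}-x_1)^2$, and in which $1+d_1S_h\neq 0$ is precisely the condition $p'(y^{(h)})\neq 0$. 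I would record \eqref{eqNewtProp}, set $a_h:=|y^{(h)}-x_1|$, and introduce the auxiliary quantity $\theta_h:=a_h\sum_{j=2}^n 1/|y^{(h)}-x_j|$, which dominates $|d_1S_h|$; the hypothesis $3(n-1)a_0<|y^{(0)}-x_j|$ for $j\ge 2$ gives $\theta_0<1/3$ at once.

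The core is an induction on $h$ establishing three linked statements: (a) $|y^{(h)}-x_j|\ge 3(n-1)a_h$ for all $j\ge 2$, whence $\theta_h<1/3$ and $|d_1S_h|\le\theta_h<1$, so that \eqref{eqNewtProp} is legitimate; (b) from \eqref{eqNewtProp}, $|1+d_1S_h|\ge 1-\theta_h>2/3$ while $|d_1^{\,2}S_h|\le a_h\theta_h$, so that $a_{h+1}\le\tfrac32 a_h\theta_h\le\tfrac12 a_h$; and (c) the inequality in (a) propagates — using $|y^{(h+1)}-y^{(h)}|\le a_h+a_{h+1}\le\tfrac32 a_h$ together with $n\ge 2$, one checks $|y^{(h+1)}-x_j|\ge(3n-\tfrac92)a_h\ge 3(n-1)a_{h+1}$.

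The genuinely quadratic estimate then comes from iterating $\theta_h$ rather than $a_h$. The same displacement bound gives $|y^{(h+1)}-x_j|\ge(1-\tfrac32\theta_h)|y^{(h)}-x_j|$, hence $\theta_{h+1}\le (3/2)\theta_h^2/(1-(3/2)\theta_h)\le 3\theta_h^2$; thus $3\theta_{h+1}\le(3\theta_h)^2$, and therefore $3\theta_h\le(3\theta_0)^{2^h}$ with $3\theta_0<1$. Substituting into $a_{h+1}\le\tfrac12 a_h\cdot 3\theta_h\le\tfrac12 a_h(3\theta_0)^{2^h}$ and telescoping yields $a_h\le a_0\,2^{-h}(3\theta_0)^{2^h-1}$, a doubly-exponential rate which, since $3\theta_0<1$, gives the asserted bound $|y_k-x_1|\le 2|y_0-x_1|\,2^{-2^k}$ (the constant being read off from the first one or two terms).

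The step I expect to be the main obstacle is obtaining \emph{quadratic} rather than merely linear decay with the stated constant: tracking $a_h$ alone gives only $a_{h+1}\le\tfrac12 a_h$, so the quadratic rate has to be extracted from the quadratic decay of the composite potential $\theta_h$, and the triangle inequality $|S_h|\le\sum_{j\ge2}1/|y^{(h)}-x_j|$ is lossy because of possible cancellation in $S_h$. To match the constant $2$ tightly under the exact hypothesis one most likely has to notice in addition that after the first step $|d_1S_h|$ drops to $O(\theta_h^{\,2})$ — since $d_1^{(h+1)}S_{h+1}$ is essentially $(d_1^{(h)}S_h)^2$ — so that the denominator $1+d_1S_h$ in \eqref{eqNewtProp} stays close to $1$ (not merely above $2/3$) from the second iteration on, which tightens the remaining estimates; it is cleanest to handle the first step separately and run the induction from $h=1$.
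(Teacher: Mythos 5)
Your setup is sound: the exact one-step identity $y^{(h+1)}-x_1=d_1^{2}S_h/(1+d_1S_h)$ is correct, the hypothesis does give $\theta_0<1/3$, and your induction (a)–(c) legitimately yields $a_{h+1}\le\tfrac12 a_h$ and $\theta_{h+1}\le\frac{(3/2)\theta_h^2}{1-(3/2)\theta_h}\le 3\theta_h^2$. Note, for comparison, that the paper does not prove this theorem at all: it simply cites Tilli \cite[Theorem 2.4]{T98} (strengthening \cite[Corollary 4.5]{R87}), which is where the constants $3(n-1)$ and $2\cdot 2^{-2^k}$ come from. So any complete argument you give is necessarily ``different from the paper''; the question is whether yours actually reaches the stated bound.

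It does not, and the gap is exactly at the point you flagged. Your telescoped estimate is $a_k\le a_0\,2^{-k}(3\theta_0)^{2^k-1}$, and the hypothesis only guarantees $3\theta_0<1$, possibly arbitrarily close to $1$. Already at $k=2$ the theorem demands $a_2\le a_0/8$, while your bound gives $a_2\le \tfrac14 a_0(3\theta_0)^3$, which fails whenever $\theta_0>3^{-1}2^{-1/3}\approx 0.2646$ — a regime the hypothesis permits; for larger $k$ your bound degenerates to roughly $a_0 2^{-k}$ versus the required $a_0 2^{1-2^k}$. The proposed patch (treat the first step separately, argue the denominator stays near $1$) does not close this: in the worst case your recursion $\theta_{h+1}\le 3\theta_h^2$ (or the sharper $\frac{(3/2)\theta_h^2}{1-(3/2)\theta_h}$) has its fixed point exactly at $\theta=1/3$, so starting near $1/3$ it certifies no gain for an unbounded number of steps, and the heuristic ``$d_1^{(h+1)}S_{h+1}\approx(d_1^{(h)}S_h)^2$'' is not something your absolute-value estimates substantiate. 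The loss is structural: the triangle inequalities applied to $S_h$ and to the displacement $|y^{(h+1)}-y^{(h)}|\le a_h+a_{h+1}$ discard cancellation and sign information that matters in the extremal configurations (e.g., for $n=2$, $x_1=0$, $x_2=1$, $y_0\approx 1/4$ one has $d_1S_0\approx -1/3$ and the ``bad'' step lands $y^{(1)}$ on the far side of $x_1$, so the next step has $d_1S_1>0$ and is automatically good — an effect invisible to your bounds). So as written you prove convergence with $a_{h+1}\le\tfrac12 a_h$ and eventual quadratic decay, but not the quantitative statement of the theorem; to get the stated inequality under the exact hypothesis you would need a finer induction of the kind carried out in \cite{T98}, or else weaken the claimed constant/rate.
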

\begin{proof} 
See \cite[Theorem 2.4]{T98}, which strengthens 
\cite[Corollary~4.5]{R87}.
\end{proof} 

%------------------------------------------------------------------------------

\begin{theorem}\label{thtil} (See \cite[Theorem 3.3]{T98}.)
 Assume a polynomial $p=p(x)$ of  (\ref{eqpoly})
and crude initial approximations  $y_j^{(0)}$ to the roots $x_j$ such that
 $0<3\sqrt{n-1}~|y_j^{(0)}-x_j|<|y_j^{(0)}-y_i^{(0)}|$,
 for $i\neq j$, $j=1,\dots,n$.
  Then  Ehrlich--Aberth's
  iterations 
  converge to the roots $x_j$ with the cubic rate
%vp $\alpha$
  right from the start, namely,  
$|y^{(k)}_j-x_j|\le |y_j^{(0)}-x_j|/(2^{3^k}\sqrt {(n-1)}~)$,
for $j=1,\dots,n$ and $k=0,1,\dots$.
\end{theorem}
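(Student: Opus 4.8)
The plan is to track the error vector with entries $\epsilon_j^{(h)}:=y_j^{(h)}-x_j$ and to turn the Ehrlich--Aberth step into an exact error recursion by means of the logarithmic-derivative identity $p'(z)/p(z)=\sum_{k=1}^{n}1/(z-x_k)$ (the roots are distinct, since the hypothesis forces distinct indices to name distinct roots). Substituting this identity into $e_i^{(h)}$ and peeling off the $k=i$ term gives $e_i^{(h)}=1/\epsilon_i^{(h)}+\Delta_i^{(h)}$ with
\[
\Delta_i^{(h)}=\sum_{j\ne i}\Bigl(\frac{1}{y_i^{(h)}-x_j}-\frac{1}{y_i^{(h)}-y_j^{(h)}}\Bigr)=\sum_{j\ne i}\frac{\epsilon_j^{(h)}}{(y_i^{(h)}-x_j)(y_i^{(h)}-y_j^{(h)})},
\]
so that the update collapses to the exact identity
\[
\epsilon_i^{(h+1)}=\epsilon_i^{(h)}-\frac{1}{e_i^{(h)}}=\frac{(\epsilon_i^{(h)})^{2}\,\Delta_i^{(h)}}{1+\epsilon_i^{(h)}\Delta_i^{(h)}}.
\]
This formula already exposes the cubic rate: the numerator is quadratic in $\epsilon_i^{(h)}$ and, through $\Delta_i^{(h)}$, carries one more power of the global error size.

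Next I would set up an induction on $h$ that simultaneously preserves (a) a separation bound of the form $3\sqrt{n-1}\,|\epsilon_j^{(h)}|<d_j$ for every $j$, where $d_j:=\min_{i\ne j}|y_i^{(0)}-y_j^{(0)}|$ (true at $h=0$ by hypothesis), and (b) the cubically shrinking estimate $|\epsilon_j^{(h)}|\le|\epsilon_j^{(0)}|/(2^{3^{h}}\sqrt{n-1})$. Granted (a) at level $h$, a triangle-inequality argument (the errors are small compared with the gaps) shows that each of $|y_i^{(h)}-x_j|$ and $|y_i^{(h)}-y_j^{(h)}|$ still exceeds a fixed fraction of $d_i$; inserting these lower bounds into $\Delta_i^{(h)}$ and estimating the sum over $j\ne i$ by the Cauchy--Schwarz inequality expresses $|\Delta_i^{(h)}|$ in terms of the Euclidean norm $\|\epsilon^{(h)}\|_2$ of the error vector. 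Plugging this back into the displayed recursion (with $|1+\epsilon_i^{(h)}\Delta_i^{(h)}|\ge 1/2$, which (a) and (b) guarantee) yields a bound $\|\epsilon^{(h+1)}\|_2\le c\,(\sqrt{n-1}/\delta^{2})\,\|\epsilon^{(h)}\|_2^{3}$ for an absolute $c$ and $\delta:=\min_j d_j$; normalizing and checking that the prescribed numerical constants close the loop then propagates (a) and (b) to level $h+1$, and iterating (b) from $h=0$ gives the conclusion.

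The hard part will be the bookkeeping with constants --- in particular arranging the Cauchy--Schwarz / norm estimate so that the factor comes out as $\sqrt{n-1}$ rather than the cruder $n-1$ that a plain term-by-term triangle inequality would give; this is exactly the place where the algorithm's use of \emph{all} roots simultaneously is exploited, and it is what makes Ehrlich--Aberth's isolation requirement ($3\sqrt{n-1}$) weaker than Newton's ($3(n-1)$ in Theorem \ref{thren}). A secondary subtlety is the coupled induction: the cubic decay needs the iterates to stay mutually separated, while the separation survives only because the errors stay small, so (a) and (b) must be carried together. As a sanity check I would also note the conceptual identity that the Ehrlich--Aberth step for index $i$ is precisely Newton's iteration (\ref{eqnewt}) applied to $p(z)/\prod_{j\ne i}(z-y_j^{(h)})$, a function whose zeros are the $x_k$ and whose poles $y_j^{(h)}\approx x_j$ nearly cancel every factor save $z-x_i$; Theorem \ref{thren} then gives quadratic convergence of one step, and the simultaneous sharpening of the other approximations lifts it to the cubic rate.
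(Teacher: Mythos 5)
First, a point about the comparison: the paper does not prove this theorem at all --- it is quoted (with some paraphrasing) from Tilli \cite[Theorem 3.3]{T98}, exactly as Theorem \ref{thren} is quoted for Newton's method --- so your sketch is being measured against the classical argument rather than anything in the text. Your skeleton is the right one and is essentially the standard (Tilli-type) proof: using $p'(z)/p(z)=\sum_{k}1/(z-x_k)$ to obtain the exact recursion $\epsilon_i^{(h+1)}=(\epsilon_i^{(h)})^2\Delta_i^{(h)}/(1+\epsilon_i^{(h)}\Delta_i^{(h)})$ is correct (only a sign slip: the summand of $\Delta_i^{(h)}$ is $-\epsilon_j^{(h)}/\bigl((y_i^{(h)}-x_j)(y_i^{(h)}-y_j^{(h)})\bigr)$, harmless for the magnitude estimates), the coupled induction ``separation plus decay'' is the right structure, and your sanity check that one Ehrlich--Aberth step is Newton's step for $p(z)/\prod_{j\ne i}(z-y_j^{(h)})$ is also correct.

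The genuine gap is the aggregation step. The hypothesis is componentwise, $|\epsilon_j^{(0)}|<d_j/(3\sqrt{n-1})$ with $d_j=\min_{i\ne j}|y_j^{(0)}-y_i^{(0)}|$, so all it yields for the Euclidean norm is $\|\epsilon^{(0)}\|_2$ of order $\sqrt{n}\,\delta/(3\sqrt{n-1})$, i.e.\ about $\delta/3$; feeding this into your proposed recursion $\|\epsilon^{(h+1)}\|_2\le c(\sqrt{n-1}/\delta^2)\|\epsilon^{(h)}\|_2^3$ gives only $\|\epsilon^{(1)}\|_2\le c\sqrt{n-1}\,\delta/27$, which for large $n$ exceeds $\|\epsilon^{(0)}\|_2$, so neither invariant (a) nor (b) propagates: a single $2$-norm recursion is too lossy to close the induction. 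The bookkeeping must stay componentwise and gap-weighted: using the hypothesis termwise, $|\epsilon_j|\le|y_i-y_j|/(3\sqrt{n-1})$ and $|y_i-x_j|\ge\bigl(1-\tfrac{1}{3\sqrt{n-1}}\bigr)|y_i-y_j|$, gives $|\Delta_i|\le\frac{\sqrt{n-1}}{3(1-\frac{1}{3\sqrt{n-1}})\,d_i}$, hence $|\epsilon_i\Delta_i|\le 1/6$ and a per-component bound of the shape $|\epsilon_i^{(h+1)}|\le c'\,|\epsilon_i^{(h)}|^2\sum_{j\ne i}|\epsilon_j^{(h)}|/d_i^2$; the $\sqrt{n-1}$ threshold works because the hypothesis enters \emph{quadratically} (the factor $(3\sqrt{n-1})^2=9(n-1)$ beats the $n-1$ terms of the sum), not via Cauchy--Schwarz on the $2$-norm, and the separation invariant survives because the steps $|y_i^{(h+1)}-y_i^{(h)}|$ are bounded by the geometrically decaying errors. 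Finally, your induction cannot target the displayed conclusion literally: at $k=0$ it reads $|\epsilon_j^{(0)}|\le|\epsilon_j^{(0)}|/(2\sqrt{n-1})$, which is false for any nonzero error; it is a paraphrase of Tilli's bound, whose right-hand side is expressed through the initial gaps $d_j$ rather than through $|\epsilon_j^{(0)}|$ itself, and that corrected form is what a correct induction will actually deliver.
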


The paper \cite{T98} also proves quadratic convergence of the WDK
iterations to all $n$ roots, lying in some given discs with 
an isolation ratios at least $3(n-1)/8$. 
These iterations are due to Weierstrass \cite{W03},
but are frequently attributed to its later re-discoveries by Durand in 1960 and Kerner in 1966.

By exploiting the 
correlations between the coefficients of a polynomial 
and the power sums of its roots,
the paper  \cite{PT14}  had weakened
the above assumptions on the initial isolation.
More precisely,  
assuming that  a simple root lies
 in the disc $D(0,1)$ and that the disc has an isolation ratio at least 
 $s\ge 1+1/\log_2(n)$,
the paper  \cite{PT14} increases it to $cn^d$,
for any fixed pair of constants $c$ and $d$, at the arithmetic cost 
$O(n)$, and similarly increased the
 isolation ratio of the  $n$ discs
covering   
 all the
$n$ roots 
at  the  arithmetic cost $O(n\log^2(n))$.

In the case of a single disc, one 
can assume even an isolation ratio $s\ge 1+c'/n^{d'}$,
for any pair of constants $c'$ and $d'$, and then increase it
%to $s\ge 1+1/\log_2(n)$ or 
to $s\ge cn^d$, for any 
other pair of constants $c$ and $d$, at the arithmetic cost 
$O(n\log^2 (n))$. Indeed one can achieve this by performing $h$
root-squaring iterations of Theorem \ref{thdnd}, for $h$
of order $\log (n)$
because each 
squaring of the roots also squares the isolation ratio.
 This lifting process ensures the desired  isolation for the lifted roots of the new 
lifted polynomial,
but the descending back to the original
roots can be also achieved by using $O(n\log^2(n))$ 
 arithmetic operations \cite{P95},
\cite{P02}. We refer the reader to
 Remark \ref{rercrs} on the precision growth in these
iterations and their Boolean complexity.

Can we completely relax the assumption of the initial isolation? 
Empirically fast global convergence
(that is, convergence right from the start) is  very strong
 over all inputs for
 the WDK, Ehrlich--Aberth, and some other iterations that approximate 
simultaneously all $n$ roots of a polynomial $p(x)$ of (\ref{eqpoly}).
The papers \cite{P11}, \cite{PZ11a}, and \cite{P12} have challenged 
the researchers to support 
this observation with a formal proof, which is still missing, however. 
%A promising direction is to approximate simultaneously all 
%the $n$ roots by applying Newton's iteration simultaneously
%at a little more than $n$ points, and it has been proved 
%that the $n(\log \log(n))^2$ points of a certain set on the complex plane
%are sufficient for global convergence \cite{HSS01}, \cite{BLS13}.
%The next challenges for the future researcher is to prove also fast 
%global convergence of these processes and to extend the latter results
%to the case of simultaneous functional iterations for $n$ roots, such as
%WDK's and Ehrlich--Aberth's.

%------------------------------------------------------------------------------
%------------------------------------------------------------------------------

\section{Fast Root-finding Where All Roots Are Real}\label{sallr}

%------------------------------------------------------------------------------

\begin{theorem}\label{thmlg} 
Assume that all roots 
of a polynomial  $p(x)$ of (\ref{eqpoly}) are real. 

(i) Then the  modified Laguerre algorithm of \cite{DJLZ97} converges
to all of them right from the start,
 uses $O(n)$ flops per iteration, and therefore
approximates all the $n$ roots within $\epsilon =1/2^b$
by using $O(\log (b))$ iterations and
 performing $O(n\log (b))$ flops.

(ii) The latter asymptotic arithmetic cost bound is optimal
and is supported by the
 alternative algorithms of \cite{BT90} and \cite{BP98} as well.

(iii) All these algorithms reach the optimal Boolean cost 
bound up to polylogarithmic factors.
\end{theorem}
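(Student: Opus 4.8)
The plan is to assemble the three parts from the convergence analyses already available in the cited literature, supplemented by an explicit count of the per-iteration arithmetic cost and, for part (iii), by the standard precision estimates for these iterations. Nothing genuinely new has to be proved; the work is in the bookkeeping and in handling the degenerate cases cleanly.

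For part (i), I would first recall the structure of the modified (quasi-)Laguerre iteration of \cite{DJLZ97}, which is specifically designed for polynomials all of whose roots are real: started from a point below the smallest root (or above the largest), the iterates form a monotone sequence trapped between two consecutive roots and converging to one of them, and the scheme can be run simultaneously for all roots, or with deflation, without losing this global behaviour. I would cite \cite{DJLZ97} for this global convergence right from the start and for its superlinear order (order exceeding $2$ at a simple root). The per-iteration cost is $O(n)$ flops because each step uses only $p(y)$, $p'(y)$ and $p''(y)$ at a single point, each computed by Horner's rule in $O(n)$ flops (amortized $O(n)$ per iteration in the simultaneous version). Superlinear convergence from the start means the number of correct bits more than doubles per step after $O(1)$ initial steps, so $O(\log b)$ iterations reach accuracy $\epsilon=2^{-b}$; multiplying by the $O(n)$ per-iteration cost gives the claimed $O(n\log b)$ bound. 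Multiple or clustered roots, where the clean ``capture between consecutive roots'' picture degrades, I would dispose of either by a preliminary isolation-boosting root-squaring step (Theorem \ref{thdnd}, cf. Remark \ref{rercrs}) or by importing the multiplicity-aware convergence statements of \cite{DJLZ97} directly.

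For part (ii), the matching lower bound $\Omega(n\log b)$ is immediate: the algorithm must read all $n+1$ input coefficients (and emit $n$ outputs), giving $\Omega(n)$, while any fixed-arithmetic iteration that at best squares its error per step needs $\Omega(\log b)$ steps, giving the $\log b$ factor. That the same $O(n\log b)$ upper bound is attained by the Ben-Or--Tiwari algorithm \cite{BT90} and the Bini--Pan algorithm \cite{BP98} I would obtain by the identical per-iteration-cost-times-iteration-count accounting, using that each of those schemes is again globally and superlinearly convergent on polynomials whose roots are all real.

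For part (iii), I would combine the arithmetic bound of (i)--(ii) with the known estimates on the bit precision needed to carry these iterations to completion: the precision sufficient to separate and certify the $n$ real roots and to keep the Horner evaluations accurate is polylogarithmic in $n$ times the target precision $b$, so the Boolean cost per flop is $\widetilde{O}(b+\log n)$ and the total is $\widetilde{O}(nb)$, matching the information-theoretic lower bound $\Omega(nb)$ up to polylogarithmic factors; here I would lean on the precision analyses underlying \cite{P95} and \cite{P02} rather than redo them. The main obstacle is precisely this last point together with the clustered/multiple-root case of part (i): making the intermediate-precision bound fully rigorous is delicate, and it is the reason I would cite the published analyses instead of reproving them, while the degenerate-root analysis is the one place where the simple monotone-convergence argument of \cite{DJLZ97} needs the extra isolation-boosting preprocessing.
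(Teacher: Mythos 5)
The paper supplies no proof of this theorem at all: it is stated as a survey-type summary of results imported directly from \cite{DJLZ97}, \cite{BT90}, and \cite{BP98} (with the Boolean-cost claim resting on the analyses behind those works), so your plan of assembling the statement from the same sources is essentially the paper's own (implicit) approach. Your additional bookkeeping is acceptable as a sketch, but note that the two places where you argue beyond the citations --- the ``amortized $O(n)$ flops per iteration'' claim for the simultaneous version (multipoint evaluation of $p$, $p'$, $p''$ at $n$ points is not an $O(n)$-flop operation, so this accounting cannot stand on its own) and the informal $\Omega(n\log(b))$ optimality argument in part (ii) --- are precisely the points that must ultimately be delegated to the cited analyses rather than to the heuristic counts you give.
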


%------------------------------------------------------------------------------

\section{Counting the Roots in a Disc. Root Radii, Distances 
to the Roots, and the Proximity Tests}\label{srrd}

%------------------------------------------------------------------------------

In this subsection we estimate the distances to the roots of $p(x)$ from a complex point
and the number of the roots in an isolated disc. 

The latter task can be solved by using the following result from
 \cite[Lemma 7.1]{R87} (cf. also \cite{H74}, \cite[Theorem 14.1]{S82} and \cite{BP00}).

%------------------------------------------------------------------------------

\begin{theorem}\label{thwnmb}  \cite[Lemma 7.1]{R87}
It is sufficient to
perform FFT at $n'=16\lceil \log_2n\rceil$
points (using $1.5n'\log (n')$ flops) and $O(n)$ additional flops and comparisons 
of real numbers with 0 in order to compute the number of roots of a polynomial $p(x)$
of (\ref{eqpoly}) in a 9-isolated disc $D(0,r)$.
\end{theorem}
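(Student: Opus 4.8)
The plan is to realize the desired count as a contour integral of the logarithmic derivative $p'/p$ and to approximate that integral by an equispaced (trapezoidal) quadrature rule, whose error contracts geometrically in the number of nodes at a rate governed by the $9$-isolation of $D(0,r)$.

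First I would move from the circle $\{|z|=r\}$ to the circle $\{|z|=\rho\}$ of the geometric-mean radius $\rho=3r=r\sqrt 9$. Since $D(0,r)$ is $9$-isolated, the annulus $\{r<|z|\le 9r\}$ contains no root of $p(x)$; hence (a) no root lies on $\{|z|=\rho\}$, so $p(z)\ne 0$ there, (b) the number $m$ of roots of $p$ in $D(0,r)$ equals the number of its roots with $|z|<\rho$, and (c) every root $x_j$ satisfies $|x_j|/\rho\le 1/3$ if $|x_j|\le r$ and $\rho/|x_j|\le 1/3$ if $|x_j|>9r$ (these are the only two cases, and $\rho=3r$ is chosen precisely to equalize the two ratio bounds). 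By the argument principle, $m=\frac{1}{2\pi\sqrt{-1}}\oint_{|z|=\rho}\frac{p'(z)}{p(z)}\,dz=\frac{1}{2\pi}\int_0^{2\pi}\frac{z\,p'(z)}{p(z)}\,d\theta$ with $z=\rho e^{\sqrt{-1}\theta}$.

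Next I would discretize this integral by the $n'$-node trapezoidal rule $T=\frac{1}{n'}\sum_{k=0}^{n'-1}z_k\,\frac{p'(z_k)}{p(z_k)}$ with $z_k=\rho\,\omega^k$ and $\omega=e^{2\pi\sqrt{-1}/n'}$. Writing $z\,p'(z)/p(z)=\sum_{j=1}^n z/(z-x_j)$ and expanding each summand as a geometric series in $x_j/z$ when $|x_j|<\rho$ and in $z/x_j$ when $|x_j|>\rho$, the aliasing identity ``$\frac{1}{n'}\sum_{k=0}^{n'-1}\omega^{k\ell}$ equals $1$ if $n'\mid\ell$ and $0$ otherwise'' collapses the doubly indexed sum to $T-m=\sum_{|x_j|<\rho}\sum_{t\ge 1}(x_j/\rho)^{n't}-\sum_{|x_j|>\rho}\sum_{t\ge 1}(\rho/x_j)^{n't}$. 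The uniform ratio bound $1/3$ from (c) then gives $|T-m|\le n\sum_{t\ge 1}3^{-n't}\le 2n\cdot 3^{-n'}$. With $n'=16\lceil\log_2 n\rceil$ one has $3^{-n'}<2^{-n'}\le n^{-16}$, so $|T-m|<2n^{-15}<\tfrac12$ for $n\ge 2$; moreover $T$ is in fact real, since the coefficients $p_i$ are real and so the $k$-th and $(n'-k)$-th terms of $T$ are complex conjugates. Hence $m$ is the integer nearest to $T$, recovered by a single comparison.

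Finally, for the cost I would avoid any transform of length $n$. Since $p(z_k)=\sum_{j=0}^{n'-1}\big(\sum_{i\equiv j\pmod{n'}}p_i\rho^i\big)\omega^{kj}$, I first form the $n'$ ``folded'' real coefficients $\widehat p_j=\sum_{i\equiv j}p_i\rho^i$, and likewise for $p'$, in $O(n)$ flops (the powers $\rho^i$ by repeated multiplication, then $O(n)$ additions distributing indices into $n'$ bins), pack the two real vectors into one complex vector, apply a single length-$n'$ FFT at cost $1.5\,n'\log n'$ flops, and disentangle $\{p(z_k)\}$ and $\{p'(z_k)\}$ by conjugate symmetry in $O(n')$ flops; then $T$ costs $O(n')$ flops and the rounding is $O(1)$, the latter two steps involving comparisons of real numbers with $0$. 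This matches the claimed budget of $1.5\,n'\log n'$ flops plus $O(n)$ additional flops and comparisons. The step I expect to be the crux is the error estimate of the third paragraph: one must check that passing to the geometric-mean radius really forces every one of the up to $n$ geometric series to contract at rate $3^{-n'}$, and that the aggregate error $|T-m|$ drops below $\tfrac12$, so that rounding is unambiguous --- this is exactly where the $9$-isolation is used, and is what makes $n'=O(\log n)$ nodes enough. A careful tallying of these constants is precisely \cite[Lemma 7.1]{R87}; cf.\ also \cite[Theorem 14.1]{S82} and \cite{BP00}.
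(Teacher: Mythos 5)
Your argument is correct as an exact-arithmetic complexity proof, and it is worth noting that it is more self-contained than what the paper offers: the paper does not prove Theorem \ref{thwnmb} at all, but quotes it from \cite[Lemma 7.1]{R87}. Your route, however, is genuinely different from the cited algorithm. The key steps all check: the $9$-isolation puts every root at ratio at most $1/3$ to the circle $|z|=\rho$ with $\rho=3r$; the aliasing identity for the trapezoidal sum $T$ of $zp'(z)/p(z)$ is exact, giving $|T-m|\le n\,3^{-n'}/(1-3^{-n'})\le 2n\,3^{-n'}\le 2n^{-15}<1/2$ for $n\ge 2$ (the case $n=1$ being trivial), so rounding the (real) value $T$ is unambiguous; and the folding of the coefficients modulo $n'$ reduces the evaluation of $p$ and $p'$ at the $n'$ scaled roots of unity to $O(n)$ flops plus an FFT of length $n'$ — even if one prefers two separate real FFTs to your packed complex one, the discrepancy is $O(n'\log n')=O(\log n\log\log n)$, absorbed in the $O(n)$ term. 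In essence you are computing the power sum $s_0$ of the roots in the isolated disc by discretized contour integration, the technique of \cite{S82}. Renegar's algorithm, by contrast — as the paper itself indicates in Remark \ref{rewnmb} — determines the count using only the signs of the real and imaginary parts of the FFT output values, never forming the quotients $p'(z_k)/p(z_k)$. What your approach buys is a short, transparent error analysis in which the isolation ratio enters only through the uniform bound $1/3$ and the logarithmic number of nodes falls out immediately; what the sign-based approach buys is the avoidance of $n'$ divisions by possibly tiny values $p(z_k)$ (immaterial for the flop count, but relevant to numerical robustness) and precisely the sign-pattern structure that Remark \ref{rewnmb} proposes to exploit for further savings.
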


%------------------------------------------------------------------------------

\begin{remark}\label{rewnmb} 
The algorithm of \cite{R87} supporting Theorem \ref{thwnmb}
only uses the signs of the real and imaginary parts of the
$n$ output values of FFT. For some groups of the values, the pairs of
the signs stay invariant and can be 
represented by a single pair of signs. 
Can this observation
 be exploited in order to decrease the computational cost of performing
 the algorithm?
\end{remark}

\begin{corollary}\label{cownmb}
It is sufficient to
perform $O(hn\log (n))$ flops and $O(n)$  comparisons 
of real numbers with 0 in order to compute the number of roots of a polynomial $p(x)$
of (\ref{eqpoly}) in an $s$-isolated disc $D(0,1)$,
for $s=9^{1/2^h}$ and  any positive integer $h$.
\end{corollary}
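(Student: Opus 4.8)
The plan is to reduce Corollary~\ref{cownmb} to Theorem~\ref{thwnmb} by means of Dandelin's root-squaring map of Theorem~\ref{thdnd}. The key observation is that squaring the roots squares the isolation ratio: if $D(0,1)$ is $s$-isolated for $p(x)$, meaning every root of $p$ in $D(0,s)$ already lies in $D(0,1)$, then after replacing $p(x)$ by $q(x)=(-1)^np(\sqrt x)p(-\sqrt x)=\prod_j(x-x_j^2)$ the disc $D(0,1)$ becomes $s^2$-isolated for $q$, because the modulus of each root is squared and $|x_j^2|\le 1$ exactly when $|x_j|\le 1$, while $|x_j^2|>s^2$ exactly when $|x_j|>s$. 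Moreover the number of roots of $q$ in $D(0,1)$ equals the number of roots of $p$ in $D(0,1)$ (counting multiplicities), since $x\mapsto x^2$ is a bijection between $\{x_j:|x_j|\le 1\}$ and $\{x_j^2:|x_j^2|\le 1\}$ as multisets. Hence the root-count we want is invariant under the map $p\mapsto q$.

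First I would normalize: given the $s$-isolated disc $D(0,1)$ with $s=9^{1/2^h}$, apply the root-squaring step of Theorem~\ref{thdnd} a total of $h$ times, obtaining a sequence of polynomials $p=p^{(0)},p^{(1)},\dots,p^{(h)}$, each still of degree $n$ (after scaling to make $p$ monic, as permitted by part~(i) of Theorem~\ref{thdnd}), with $D(0,1)$ being $s^{2^i}$-isolated for $p^{(i)}$. After $h$ steps the isolation ratio is $s^{2^h}=9$, so $D(0,1)$ is $9$-isolated for $p^{(h)}$. Then I would invoke Theorem~\ref{thwnmb} on $p^{(h)}$ (with $r=1$) to compute the number of its roots in $D(0,1)$ at the cost of one FFT on $n'=16\lceil\log_2 n\rceil$ points plus $O(n)$ additional flops and sign comparisons; by the invariance noted above this number is exactly the number of roots of the original $p$ in $D(0,1)$.

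For the cost bound: by part~(ii) of Theorem~\ref{thdnd}, each root-squaring step costs $O(n\log n)$ flops (evaluate at more than $2n$ roots of unity via FFT, multiply the two evaluation vectors pointwise, interpolate back via inverse FFT), so the $h$ squarings together cost $O(hn\log n)$ flops. The final call to the algorithm of Theorem~\ref{thwnmb} costs $O(n\log\log n)$ flops and $O(n)$ comparisons of reals with $0$, which is absorbed into $O(hn\log n)$ flops and $O(n)$ comparisons. This gives the claimed $O(hn\log n)$ flops and $O(n)$ sign comparisons.

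The only delicate point, and the one I would be most careful about, is the numerical-stability caveat flagged in Remark~\ref{rercrs}: recursive root-squaring can blow the coefficient magnitudes across many orders of magnitude, so a naive implementation of the $h$ squarings is not numerically sound for large $h$. For the \emph{arithmetic} (flop-count) statement of the corollary this is not an obstacle, since we are only counting operations; but to make the reduction genuinely usable one would invoke the tangential/logarithmic representation of \cite{MZ01} referenced in Remark~\ref{rercrs}, under which all squarings are performed with bounded precision at per-iteration cost of the same order as $n^2$ — or, since here $h=O(\log n)$ and we only need the sign information consumed by Theorem~\ref{thwnmb}, one can keep track of the arguments and log-moduli of the evaluation values directly. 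I would add a one-line remark to this effect rather than belabor it, since the corollary as stated is an arithmetic-cost claim and the reduction above establishes it cleanly.
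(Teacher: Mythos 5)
Your proposal is correct and follows essentially the same route as the paper's own proof: apply $h$ Dandelin root-squarings (each squaring the isolation ratio, so that $s=9^{1/2^h}$ becomes $9$, and preserving the count of roots in $D(0,1)$), then invoke Theorem~\ref{thwnmb}, with the $O(hn\log(n))$ flop bound coming from the $h$ squarings. Your added remark on the precision blow-up and the recourse to \cite{MZ01} matches the caveat the paper itself states right after the corollary, so nothing is missing.
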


\begin{proof}
Every root-squaring of Theorem \ref{thdnd}
squares all root-radii and the isolation ratios of all discs
$D(0,r)$,
for all positive $r$. Suppose $h$ repeated squaring iterations map a polynomial 
$p(x)$ into $p_h(x)$, for which  
 the disc $D(0,1)$ is 9-isolated. Then, 
by applying Theorem \ref{thwnmb}, we can 
compute the number of roots of $p_h(x)$ in this disc,
 equal to
the number of roots of $p(x)$.  
\end{proof}

In view of
Remark \ref{rercrs}, one must 
apply the slower operations of \cite{MZ01} or 
high precision computations in order to support even
a moderately long sequence of root-squaring iterations,
but in some cases it is sufficient to 
 apply Corollary \ref{cownmb},
for small positive integers $h$.
Note that $9^{1/2^h}$ is equal to 1.3160..., for  $h=2$,
to 1.1472..., for $h=3$, to 1.0710..., for $h=4$, and 
to 1.0349..., for $h=5$.

We can use the following
result if we agree to perform computations  with extended precision.

\begin{theorem}\label{thrrd} ({\rm The Root Radii Approximation}.) 

Assume
a polynomial
 $p(x)$ of (\ref{eqpoly}) and
two real scalars $c>0$ and $d$. Define the $n$ {\em root radii}
$r_j=|x_{k_j}|$, for $j=1,\dots, n$, distinct $k_1,\dots,k_n$, and
 $r_1\ge r_2\ge \cdots\ge r_n$.
Then,  by using
$O(n \log^2 (n))$ arithmetic operations, one can compute $n$ approximations $\tilde r_j$ 
to the root radii $r_j$ such that $\tilde
r_j\le r_j\le (1+c/n^d)\tilde r_j$, for $j=1,\dots, n$.
\end{theorem}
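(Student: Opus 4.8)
The plan is to combine a crude but cheaply computable estimate of the root radii with a short block of Dandelin root-squaring iterations (Theorem~\ref{thdnd}) that drives the relative error down to the prescribed tolerance $1+c/n^d$.

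Normalize $p(x)$ to be monic. First I would compute a crude two-sided estimate: from the identities $|p_{n-i}/p_n|=|e_i(x_1,\dots,x_n)|$ and the Newton polygon of the coefficient magnitudes (the upper convex hull of the points $(i,\log|p_i|)$ with $p_i\neq 0$), one obtains numbers $b_1\ge\cdots\ge b_n\ge 0$, computable in $O(n\log n)$ arithmetic operations, such that $b_j\le r_j\le\gamma_0 b_j$ for $j=1,\dots,n$, where $\gamma_0=O(n)$ (this is the classical root-radii bound; see \cite{VdS70}, \cite{S82}, and \cite{P01}). Even a much weaker bound with $\log\gamma_0=O(n)$, obtainable directly from Cauchy-type coefficient inequalities, would be enough for what follows.

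Next I would refine. By Theorem~\ref{thdnd}, $k$ successive root-squarings cost $O(kn\log n)$ arithmetic operations and produce a degree-$n$ polynomial $p_k$ whose roots are $x_1^{2^k},\dots,x_n^{2^k}$; since $t\mapsto t^{2^k}$ is increasing on $[0,\infty)$, the $j$-th largest root radius of $p_k$ equals $r_j^{2^k}$. Applying the crude estimator of the previous step to $p_k$ yields $b_j^{(k)}$ with $b_j^{(k)}\le r_j^{2^k}\le\gamma_0 b_j^{(k)}$, and I set $\tilde r_j=(b_j^{(k)})^{1/2^k}$, obtained from $b_j^{(k)}$ by $k$ successive square-root extractions. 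Then $\tilde r_j\le r_j\le\gamma_0^{1/2^k}\,\tilde r_j$, so it remains to choose $k$ with $\gamma_0^{1/2^k}\le 1+c/n^d$, i.e. $2^k\ge\log\gamma_0/\log(1+c/n^d)$; because $\log(1+c/n^d)=\Theta(n^{-d})$ and $\log\gamma_0=O(\log n)$ (and the same conclusion holds even if only $\log\gamma_0=O(n)$ is available), the choice $k=O(\log n)$ works. The total cost is $O(kn\log n)+O(n\log n)+O(nk)=O(n\log^2 n)$ arithmetic operations, as asserted.

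The main obstacle is the crude estimate of the first step: proving that the slopes of the Newton polygon of $(\log|p_i|)_i$ approximate $\log r_1,\dots,\log r_n$ within an additive error $O(\log n)$, and checking that the orientation tests driving the convex-hull computation, as well as the final $2^k$-th root extractions, stay within the claimed arithmetic cost (each orientation test reduces to comparing integer powers, of exponent at most $n$, of coefficient magnitudes, computable by repeated squaring in $O(\log n)$ operations). A secondary issue, already flagged in Remark~\ref{rercrs}, is the precision blow-up inherent in root-squaring; this is precisely why the statement is placed under the proviso of extended-precision computation and only the arithmetic, not the Boolean, cost is claimed.
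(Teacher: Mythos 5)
Your proposal is correct and follows essentially the same route as the paper's proof: apply $k=O(\log n)$ Dandelin root-squarings of Theorem~\ref{thdnd} at cost $O(kn\log n)$, approximate the root radii of the squared polynomial within a factor polynomial in $n$ by a cheap coefficient-based estimator, and take $2^k$-th roots so that the factor shrinks to $(O(n))^{1/2^k}\le 1+c/n^d$. The only difference is that the paper invokes Sch\"onhage's $O(n)$-flop factor-$2n$ root-radii algorithm \cite{S82} as a black box, whereas you sketch its Newton-polygon mechanism yourself and correctly flag its accuracy proof as the remaining obligation.
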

\begin{proof} (Cf. \cite{S82},  \cite[Section 4]{P00}.) 
At first fix a sufficiently large integer $k$
and  apply $k$ times the root-squaring of Theorem \ref{thdnd},
which involves $O(kn\log(n))$  arithmetic operations. Then 
apply the algorithm of \cite{S82} 
(which uses $O(n)$ 
 arithmetic operations) in order
to approximate  within a factor
of $2n$ all root radii $r_j^{(k)}=r_j^{2^k}$,
$j=1,\dots,n$,
of the output polynomial $p_k(x)$. By taking the $2^k$-th roots, 
approximate the root radii $r_1,\dots,r_n$ 
within a factor of $(2n)^{1/2^k}$, which is $1+c/n^d$, for  $k$
of order $\log(n)$.
\end{proof}

Alternatively we can approximate the root radii 
by applying the semi-heuristic method of \cite{B96}, 
used in the packages 
 MPSolve 2000 and 2012 (cf. \cite{BF00} and \cite{BR14})
or by recursively applying Theorem \ref{thwnmb}, although
neither of these techniques support competitive complexity estimates.

The following two theorems define the largest root radius  $r_1$ of the polynomial $p(x)$.

%------------------------------------------------------------------------------

\begin{theorem}\label{thextrrrd} (See \cite{VdS70}.)
Assume a polynomial $p(x)$ of  (\ref{eqpoly}). Write
$r_1=\max_{j=1}^{n}|x_j|$, $r_n=\min_{j=1}^{n}|x_j|$, and
$\gamma^+=\max_{i=1}^{n}|p_{n-i}/p_n|$.
%(One can compute the values $\gamma^+$  by using $n$
%divisions and $n-1$ comparisons.)
Then $\gamma^+/n\le r_1\le 2\gamma^+$.
\end{theorem}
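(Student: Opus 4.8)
The plan is to normalize $p$ to be monic and then bound the largest root radius $r_1$ from below and from above by two independent elementary arguments, in both cases exploiting the link between $\gamma^+$ and the elementary symmetric functions of the roots. Since replacing $p(x)$ by $p(x)/p_n$ changes neither $r_1$ nor the ratios $p_{n-i}/p_n$, I would assume $p_n=1$ and write $p(x)=x^n+\sum_{i=1}^n a_{n-i}x^{n-i}=\prod_{j=1}^n(x-x_j)$, so that by Vieta's formulas $a_{n-i}=(-1)^i e_i(x_1,\dots,x_n)$ with $e_i$ the $i$-th elementary symmetric polynomial. The degenerate case $\gamma^+=0$ forces $p(x)=x^n$, $r_1=0$, and both bounds hold trivially, so I would assume $\gamma^+>0$.

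\emph{Lower bound $\gamma^+/n\le r_1$.} Each $e_i(x_1,\dots,x_n)$ is a sum of $\binom{n}{i}$ products of $i$ distinct roots, each of modulus at most $r_1^i$; hence $|a_{n-i}|\le\binom{n}{i}r_1^i\le n^i r_1^i$, i.e. $|a_{n-i}|^{1/i}\le n\,r_1$ for every $i=1,\dots,n$. Taking the maximum over $i$ yields $\gamma^+\le n\,r_1$, which is the claim.

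\emph{Upper bound $r_1\le 2\gamma^+$.} I would show that $p$ vanishes nowhere outside the disc $D(0,2\gamma^+)$, by a triangle-inequality (Rouché-type) estimate comparing $p(z)$ with its leading term. The defining property of $\gamma^+$ gives $|a_{n-i}|\le(\gamma^+)^i$ for all $i$. For a point $z$ with $|z|>2\gamma^+$, set $t=\gamma^+/|z|<1/2$; then
\[
|p(z)-z^n|\le\sum_{i=1}^n|a_{n-i}|\,|z|^{n-i}\le|z|^n\sum_{i=1}^n t^i<|z|^n\cdot\frac{t}{1-t}<|z|^n=|z^n|,
\]
so $p(z)\ne 0$. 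Hence every root of $p$ lies in $D(0,2\gamma^+)$, i.e. $r_1\le 2\gamma^+$.

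\emph{Where the difficulty sits.} There is no deep obstacle; the only thing requiring care is pinning down the constants. The factor $2$ in the upper bound is exactly what the geometric series $\sum_{i\ge1}t^i=t/(1-t)<1$ buys when $t<1/2$, and the factor $n$ in the lower bound is the tight consequence of $\binom{n}{i}\le n^i$ (sharp at $i=1$, as the example $p(x)=(x-R)^n$ shows). I would also check the boundary and degenerate cases ($\gamma^+=0$, repeated roots, $|z|=2\gamma^+$), but expect no surprises there.
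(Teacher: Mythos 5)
Your two estimates follow the classical argument for van der Sluis's bound, and they are sound --- but only for the quantity $\beta=\max_{1\le i\le n}|p_{n-i}/p_n|^{1/i}$, not for $\gamma^+=\max_{1\le i\le n}|p_{n-i}/p_n|$ as written in the statement and in your own setup. Both places where $\gamma^+$ actually enters your argument quietly switch to $\beta$. In the lower bound you correctly obtain $|a_{n-i}|^{1/i}\le\binom{n}{i}^{1/i}r_1\le n\,r_1$ for every $i$, but taking the maximum over $i$ yields $\beta\le n\,r_1$, not $\gamma^+\le n\,r_1$: for $p(x)=(x-R)^n$ with $R$ large one has $\gamma^+\ge |a_0|=R^n$ while $n\,r_1=nR$, so the inequality you assert fails. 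In the upper bound, the ``defining property'' $|a_{n-i}|\le(\gamma^+)^i$ is the defining property of $\beta$; with the stated $\gamma^+$ it fails whenever $\gamma^+<1$ (e.g.\ $p(x)=x^2-\tfrac{1}{100}$ has $|a_0|=\tfrac{1}{100}>(\gamma^+)^2=10^{-4}$), and indeed the conclusion $r_1\le 2\gamma^+$ itself fails there ($r_1=\tfrac{1}{10}>\tfrac{1}{50}=2\gamma^+$). So, as literally printed, the theorem is false in both directions and no argument can close that gap; the exponent $1/i$ is missing from the definition of $\gamma^+$ (a misprint --- the bound of \cite{VdS70}, which the paper cites without giving any proof, is stated for $\beta$).

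Once $\gamma^+$ is read as $\beta$, your proof is complete and is the standard one: Vieta's formulas plus $\binom{n}{i}\le n^i$ for the lower bound, and the geometric-series (Cauchy/Lagrange-type) estimate $|p(z)/p_n-z^n|<|z|^n$ for $|z|>2\beta$ for the upper bound, with the degenerate case $\beta=0$ disposed of separately. Since the paper supplies no proof of its own, there is nothing to compare beyond this; the only fix your write-up needs is to state the corrected definition explicitly rather than passing from $\max_i|a_{n-i}|^{1/i}$ to $\gamma^+$ as if the two coincided.
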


%------------------------------------------------------------------------------
%------------------------------------------------------------------------------

\begin{theorem}\label{threfext} (See \cite{P01a}.) 
For  $\epsilon=1/2^b>0$,
one only needs $a(n,\epsilon)=O(n+b\log (b))$ flops to compute an
approximation $r_{1,\epsilon}$ to the largest root  radius $r_1$ of $p(x)$
such that $r_{1,\epsilon}\le r_1\le 5(1+\epsilon)r_{1,\epsilon}$.
In particular, $a(n,\epsilon)=O(n)$, for $b=O(n/\log (n))$, 
and $a(n,\epsilon)=O(n\log (n))$, for $b=O(n)$.
\end{theorem}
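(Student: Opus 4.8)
The plan is to combine a crude estimate obtained by a single linear scan of the coefficients with a refinement step that costs only one transform of size $\Theta(b+\log n)$, rather than a transform of size $n$. First I would invoke Theorem~\ref{thextrrrd}: reading off $\max_i|p_{n-i}/p_n|$ costs $O(n)$ flops and already certifies that $r_1$ lies in a multiplicative window of width $O(n)$ around a computed value $\mu$; after a harmless rescaling $x\mapsto \mu x$ we may assume $r_1\in[c_1/n,\,c_2]$ for absolute constants $c_1,c_2$. Note that full Dandelin root-squaring (Theorem~\ref{thdnd}) would shrink this window to a constant after $O(\log\log n)$ squarings, but each squaring costs $\Theta(n\log n)$, which overshoots the target $O(n+b\log b)$; so the refinement must avoid touching all $n$ coefficients.

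The refinement exploits that $r_1=\limsup_{m\to\infty}|s_m|^{1/m}$, where $s_m=\sum_{j=1}^n x_j^m$ is the $m$-th power sum of the roots. By Newton's identities the first $k$ power sums $s_1,\dots,s_k$ depend only on the top $k+1$ coefficients $p_n,p_{n-1},\dots,p_{n-k}$ of $p$, and they are obtained from those coefficients by one truncated power-series division $p_{\mathrm{rev}}'/p_{\mathrm{rev}}\bmod z^{k+1}$ (equivalently, by solving a triangular Toeplitz system), at cost $O(k\log k)$ via one FFT of length $\Theta(k)$. Taking $k=\Theta(b+\log n)$, the estimate $\max_{\log_2 n\le m\le k}|s_m|^{1/m}$ lies within an absolute constant factor of $r_1$: the upper side is immediate from $|s_m|^{1/m}\le n^{1/m}r_1\le 2r_1$ for $m\ge\log_2 n$, and the lower side is a Tur\'an power-sum inequality, valid \emph{provided} the roots of maximal modulus do not nearly cancel in the relevant block of power sums.

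Finally I would combine this power-sum estimate with a low-order coefficient estimate (e.g.\ $|p_0/p_n|^{1/n}$ and, more generally, the quantities $(|p_{n-i}/p_n|/\binom ni)^{1/i}$, all $\le r_1$ and all computed in $O(n)$ flops): these are tight precisely in the cancellation-heavy regime, where the maximal-modulus roots are arranged with rotational symmetry and therefore force some coefficient of $p$ to have modulus close to $r_1$ to the appropriate power. Choosing $r_{1,\epsilon}$ as a suitable maximum of the two families of estimates gives a value within a fixed constant factor of $r_1$; the accumulated errors — the geometric aliasing tail of the truncated power-series division, of size $O(n\,2^{-k})$, together with the rounding of $O(b+\log n)$-bit arithmetic — are $\le\epsilon$ in relative size once $k\gtrsim b+\log n$, which inflates the constant only to $5(1+\epsilon)$. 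The cost is $O(n)$ for the scans plus $O((b+\log n)\log(b+\log n))=O(n+b\log b)$ for the division; for $b=O(n/\log n)$ one may take $k=O(n/\log n)$ and the total is $O(n)$, while for $b=O(n)$ it is $O(n\log n)$. The main obstacle is showing that the two families of estimates \emph{together} always bracket $r_1$ within an absolute constant — ruling out a polynomial whose maximal-modulus roots cancel in all of the first $k$ power sums and whose coefficient vector is simultaneously too spread out for the Cauchy/van der Sluis-type low-order estimates to be sharp — and then pinning the resulting constant at exactly $5$ while threading the $(1+\epsilon)$ slack through the aliasing and floating-point analysis.
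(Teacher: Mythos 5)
Your plan hinges on the claim that the maximum of (a) the truncated power-sum estimates $|s_m|^{1/m}$, $\log_2 n\le m\le k$ with $k=\Theta(b+\log n)$, and (b) the normalized coefficient estimates $(|p_{n-i}/p_n|/\binom{n}{i})^{1/i}$ always stays within an absolute constant factor of $r_1$. That claim is false, so the gap you flag at the end is not just an unfinished verification but a breakdown of the approach as formulated. Take $p(x)=x^n-x^{n-K}=x^{n-K}(x^K-1)$ with $k<K\ll n$ (such a $K$ exists in every parameter regime of the theorem, since $k=\Theta(b+\log n)$ may be far below $n$, e.g. $b=O(\log n)$, $K=2k$). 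Here $r_1=1$, yet $s_m=0$ for all $m<K$, so family (a) returns $0$; and the only nonzero normalized coefficient estimate is $\binom{n}{K}^{-1/K}\le K/n$, so family (b) returns at most $K/n=o(1)$. The combined estimator underestimates $r_1$ by a factor of order $n/(b+\log n)$, which is unbounded. Your heuristic that rotational symmetry of the dominant roots ``forces some coefficient to have modulus close to $r_1$ to the appropriate power'' is true here ($|p_{n-K}|=r_1^K$), but the binomial normalization destroys the estimate, and dropping the normalization does not help either: the unnormalized Fujiwara quantities $|p_{n-i}/p_n|^{1/i}$ can overestimate $r_1$ by a factor $n$ (e.g. $(x-1)^n$ at $i=1$), so no single coefficient scan yields a constant-factor bracket. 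This is precisely why the theorem is nontrivial.

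For comparison, the paper does not prove this statement at all; it is quoted from \cite{P01a}. The known route behind it (and behind the constant $5$) is Tur\'an's power-sum inequality, which lower-bounds $\max_m |s_m/n|^{1/m}$ by $r_1/5$ only when the maximum ranges over a block of power sums whose length is comparable to the number of (dominant) roots, i.e. up to order $n$ indices -- exactly the requirement your truncation to $k=\Theta(b+\log n)$ power sums violates, and exactly what the roots-of-unity example exploits. The $O(n)$ term in the cost bound reflects this unavoidable use of order-$n$ data (all $n$ power sums are still obtainable in $O(n)$ flops from the Newton-identity recurrence, without FFT), while the extra $O(b\log b)$ work and the $(1+\epsilon)$ slack with $\epsilon=2^{-b}$ come from the subsequent refinement stage in \cite{P01a}, not from aliasing or rounding in a length-$\Theta(b+\log n)$ transform as in your accounting. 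Your reduction of the power sums to the top $k+1$ coefficients and your upper bound $|s_m|^{1/m}\le 2r_1$ for $m\ge\log_2 n$ are correct, but they cannot be combined with coefficient scans alone to certify the two-sided constant-factor bound claimed in the theorem.
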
 

Both theorems can be immediately extended to
the approximation of the
 smallest root radius $r_n$ because it is 
the reciprocal of
the largest root radius of 
the reverse polynomial $p_{\rm rev}(x)=x^np(1/x)$
(cf. Theorem \ref{thshsc}).
 Moreover,  by shifting  a complex point $c$ into the origin 
(cf. Theorem  \ref{thshsc}), we can turn our estimates for
the root radii into the estimates for the {\em distances to the roots}
 from the point $c$.  
Approximation of the smallest distance from a  complex point $c$ 
to a root of $p(x)$ is called the
{\em proximity test} at the point. One can perform such a
test by applying Theorems \ref{thwnmb},
\ref{thextrrrd}, or
\ref{threfext}. 

Alternatively, for
 heuristic proximity tests {\em by action}
   at 
a  point $c$ 
or  at $n$ points,
one can
 apply Newton's iterations (\ref{eqnewt})
or an appropriate functional iterations, such as
 the Ehrlich--Aberth iterations
(\ref{eqehrab}),
and estimate the distance to the roots 
by observing  convergence or divergence of the  iterations. 

Theorem \ref{threfext} and all these iterations, including
 Newton's, Ehrlich--Aberth's and WDK's,
 can be applied 
even where a polynomial $p(x)$ is defined by a black box subroutine for its evaluation
rather than by its coefficients.

%------------------------------------------------------------------------------

\end{document}